\newcommand{\methodname}{{\sc Aligned-IS}}
\theoremstyle{plain}
\newtheorem{theorem}{Theorem}[section]
\theoremstyle{definition}
\newtheorem{definition}[theorem]{Definition}
\theoremstyle{remark}
\newcommand{\x}{\bm{x}}
\renewcommand{\P}{\mathcal{P}}
\newcommand{\cA}{\mathcal{A}}
\newcommand{\cV}{\mathcal{V}}
\newcommand{\bbE}{\mathbb{E}}
\title{Robust Distortion-Free Watermark for Autoregressive Audio Generation Models}
\author{
  Yihan Wu\thanks{Equal contribution}~, ~Georgios Milis$^*$, ~Ruibo Chen$^*$, ~Heng Huang\thanks{This work was partially supported by NSF IIS 2347592, 2348169, DBI 2405416, CCF 2348306, CNS 2347617, RISE 2536663.} \\
  Department of Computer Science\\
  University of Maryland, College Park\\
  \texttt{\{ywu42, milis, rbchen, heng\}@umd.edu}
}
\begin{document}

\maketitle

\begin{abstract}
  The rapid advancement of next-token-prediction models has led to widespread adoption across modalities, enabling the creation of realistic synthetic media. In the audio domain, while autoregressive speech models have propelled conversational interactions forward, the potential for misuse, such as impersonation in phishing schemes or crafting misleading speech recordings, has also increased. Security measures such as watermarking have thus become essential to ensuring the authenticity of digital media.
  Traditional statistical watermarking methods used for autoregressive language models face challenges when applied to autoregressive audio models, due to the inevitable ``retokenization mismatch'' - the discrepancy between original and retokenized discrete audio token sequences. To address this, we introduce \methodname, a novel, distortion-free watermark, specifically crafted for audio generation models. This technique utilizes a clustering approach that treats tokens within the same cluster equivalently, effectively countering the retokenization mismatch issue. Our comprehensive testing on prevalent audio generation platforms demonstrates that \methodname\ not only preserves the quality of generated audio but also significantly improves the watermark detectability compared to the state-of-the-art distortion-free watermarking adaptations, establishing a new benchmark in secure audio technology applications. We release the code in \url{https://github.com/g-milis/AlignedIS}.
\end{abstract}

\section{Introduction}
Autoregressive audio generation models~\citep{lakhotia2021generative,rubenstein2023audiopalm,borsos2022audiolm,zhang2023speechgpt,nguyen2024spirit,zhan2024anygpt,ge2023making,lu2024unified}, such as those enabling sophisticated voice synthesis, have significantly advanced in mimicking human-like speech. As these models become integral to various applications, from virtual assistants to real-time translation, their potential misuse also escalates. Malicious uses include impersonating individuals in phishing attacks, fabricating audio for misinformation, and automating scam calls with natural-sounding voices. Additionally, the spread of synthetic audio can undermine the authenticity of digital communication and pose challenges in legal contexts where recording verification is crucial. To address these concerns, implementing robust and detectable watermarks in synthetic audio becomes essential, ensuring traceability and accountability in the use of generative models while safeguarding against their unauthorized exploitation.

Statistical watermarking techniques are a promising method to identify machine-generated content from autoregressive language models \citep{kirchenbauer2023watermark,liu2024adaptive,chen2025improved}. However, due to \textit{retokenization mismatch}~\citep{wu2025watermark}, directly applying them to audio generation models leads to poor detectability. Unlike autoregressive language models, where text tokenization is deterministic and reversible, autoregressive audio models incorporate an additional encoder and decoder that map from audio to discrete tokens and back. In the process of audio generation, an audio prompt is encoded into a sequence of tokens. Subsequently, this sequence undergoes next-token prediction to generate an output sequence, which is then decoded back into audio form. Watermark detection requires re-encoding the generated audio into its tokenized form. However, this retokenized token sequence does not exactly match the original in-generation sequence, a phenomenon we refer to as \textit{retokenization mismatch}.

To tackle this challenge, we introduce \methodname, a novel, robust, and distortion-free watermark specifically designed for audio generation models.
We observe that mismatch arises from different discretization of similar continuous features, which should be close in the encoder and decoder's feature spaces.
Leveraging this insight, we developed a clustering-based watermarking framework that considers tokens within the same cluster as equivalent. We summarize our contributions as follows:
\begin{itemize}
    \item We develop \methodname, the first distortion-free watermark for autoregressive audio generation models. We identify the retokenization mismatch phenomenon and we propose a novel clustering-based distortion-free watermarking algorithms to address this challenge.

    \item Through comprehensive experiments, we validate the distortion-freeness, detectability, and robustness of \methodname\ on popular open-source audio generation models. Our results show a significantly improvement in detectability compared to directly applying existing distortion-free watermarks to audio generation models.
\end{itemize}

\section{Related Work}

\paragraph{Audio generation models.}
With the success of large language models, researchers have developed multimodal foundation models that extend transformers to handle continuous signals through modality-specific encoders and decoders \citep{attention}. \citet{lakhotia2021generative} first reframed pure audio generation as a language-modeling task \citep{lakhotia2021generative}, which gave rise to AudioLM \citep{borsos2022audiolm} and later AudioPaLM’s integration of text generation capabilities \citep{rubenstein2023audiopalm}. Building on this, SpeechGPT \citep{zhang2023speechgpt} and SpiritLM \citep{nguyen2024spirit} discretize HuBERT features \citep{hsu2021hubert} into semantic units that a fine-tuned LLaMA \citep{touvron2023llama2} consumes as an expanded vocabulary of text and audio tokens. The same discretization strategy supports fine-grained tasks such as voice cloning from text and a discretized voice prompt \citep{chen2024vall}. Other models-including SEED-LLaMA \citep{ge2023making}, Unified-IO 2 \citep{lu2024unified}, and AnyGPT \citep{zhan2024anygpt}-represent audio and images with discrete token sequences alongside text. In contrast, CoDi-2 \citep{tang2024codi}, VITA \citep{fu2024vita}, and NExT-GPT \citep{wu24nextgpt} employ a transformer decoder that directly processes continuous feature vectors together with text token embeddings, treating multimodal outputs as regression while still using next-token prediction for text.

\paragraph{Post hoc audio watermarking.} Embedding watermarks into host audio dates back decades \citep{lie2006robust}, leveraging human insensitivity to mid‐frequency bands. More recently, deep‐learning methods use autoencoder architectures to invisibly encode payloads into a frequency transform of the signal \citep{timbrewatermarking-ndss2024, chen2023wavmark}. The work of \citet{san2024proactive} further introduces temporally localized watermarking with time‐dependent detection, and error‐correcting codes have been employed to boost robustness \citep{wu2023adversarial}. However, these approaches degrade output quality and offer no formal statistical guarantee for reliable watermark detection. Furthermore, they are fragile to waveform perturbations, and require an additional processing step to embed them.
 
\paragraph{Distortion-free watermark.} \citet{Aaronson2022} introduced a pioneering distortion-free watermarking approach that utilizes Gumbel-Softmax to alter token distributions. \citet{christ2023undetectable} and \citet{kuditipudi2023robust} applied inverse-sampling and Gumbel-Softmax, respectively, to modify the token distributions in watermarked content, employing watermark keys based on either token positioning or predetermined key lists. However, the technique from \citet{christ2023undetectable} exhibits limited resilience when altered and lacks empirical evidence of its detectability. In contrast, \citet{kuditipudi2023robust}'s method demands extensive resampling from the secret key distribution for detection, proving inefficient for extensive texts. \citet{hu2023unbiased} proposed inverse-sampling and reweight strategies for watermarking, though their detection method is not model-agnostic and requires access to the language model API and specific prompts. \citet{wu2023dipmark} refined the reweight technique and introduced a model-agnostic detection mechanism. \citet{dathathri2024scalable} proposed SynthID, which enables distortion-freeness of LM watermarking with multiple generations.

\section{Preliminary}

\begin{figure}[t]
  \centering
    \centering
    \includegraphics[width=\linewidth]{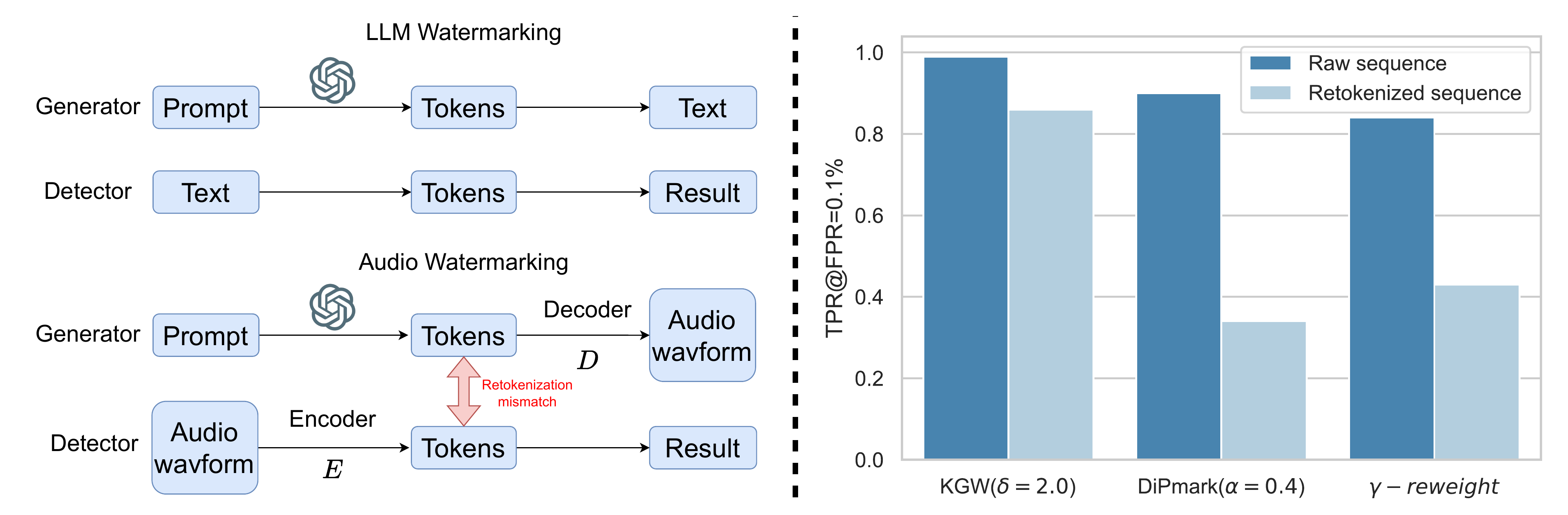}
  \caption{\textbf{Left:} Illustration of the retokenization mismatch problem during audio watermark detection. \textbf{Right:} Performance comparison of audio watermark detection on raw versus retokenized audio sequences. We evaluate KGW watermark~\citep{kirchenbauer2023watermark} DiPmark~\citep{wu2023dipmark}, and $\gamma$-reweight~\citep{hu2023unbiased}, reporting the true positive rate at a fixed 1\% false positive rate.}
  \label{fig:retokenization_problem}
  \vspace{-0.5cm}
\end{figure}

\paragraph{Notations.} We follow the notations used in \citep{hu2023unbiased}. The vocabulary (or token) set is denoted by $V$ and its cardinality by $N=|V|$. We define the set $\cV$, which includes all possible token sequences including those of zero length and the set $\cA$, which includes all possible audios. Within an autoregressive audio generation model, a token sequence is generated based on a specific prompt. At any given step, the probability of producing the next token $x_{n+1} \in V$, given the preceding sequence $x_1, \ldots, x_n$, is denoted by $P_M(x_{n+1} \mid x_1, x_2, \ldots, x_n)$.
For simplicity and clarity, we adopt a more concise notation: $P_{M}(\bm{x}_{n+1:n+m} \mid \bm{x}_{1:n})$, where $\bm{x}_{n+1:n+m} = (x_{n+1}, \ldots, x_{n+m})$. It is important to note that the prompt is intentionally excluded from this notation. In audio generation models, we denote the audio-token encoder by $E(\cdot):\cA\to\cV$ and the token-audio decoder, or vocoder, by $D(\cdot):\cV\to\cA$.
\subsection{Statistical Watermarks}
In watermarking applications, the service provider employs a set of \textit{i.i.d.} watermark codes $\{\theta_i \in \Theta, i \in \mathbb{N}\}$, defined over the code space $\Theta$. Each code $\theta_i$ is typically derived from a secret key $\textsf{key} \in \mathcal{K}$ and the n-gram preceding context, denoted $\x_{t-n:t-1}$.

In the watermark generator, a reweight strategy is used to embed a statistical signal into the generated content. Let $\mathcal{P}$ denote the set of all probability distributions over the token set $V$. The reweight strategy is a function $P_W : \mathcal{P} \times \Theta \to \mathcal{P}$. For the token distribution at the $(n+1)$-th generation step, $P_M(x_{n+1} \mid \x_{1:n}) \in \mathcal{P}$, the watermarked distribution is defined by $P_W(P_M(x_{n+1} \mid \x_{1:n}), \theta_i)$. For brevity, this is represented as $P_W(x_{n+1} \mid \x_{1:n}, \theta_i)$. A distortion-free watermark ensures that the averaged distribution $P_W(x_{n+1} \mid \x_{1:n}, \theta_i)$ with respect to $\theta_i$ is equal to the original distribution $P_M(x_{n+1} \mid \x_{1:n})$.
\begin{definition}[Distortion-free watermark]
    Given the watermark code set $\Theta$, a distribution $\P_{\Theta}$ on $\Theta$, original LM distribution $P_M$, and the watermarked distribution $P_W(\cdot|\theta\in\Theta)$ A distortion-free watermark should satisfy $\forall x \in V$,
\[
\mathbb{E}_{\theta \sim P_{\Theta}} \left[P_W(x \mid \x_{1:n}, \theta)\right] = P_{M}(x\mid \x_{1:n}).
\]
\end{definition}

Current popular distortion-free strategies include Gumbel-softmax~\citep{Aaronson2022}, inverse-sampling~\citep{christ2023undetectable, hu2023unbiased, kuditipudi2023robust} and reweight-based strategy~\citep{wu2023dipmark,dathathri2024scalable,feng2025bimark,chen2025improved}.

During watermark detection, the user only has access to the watermark key, the reweight strategy, and the generated audio. The detector employs a hypothesis testing approach to ascertain the presence of the watermark signal. The null hypothesis $H_0$ is defined as \textit{``The content is generated without the presence of watermarks"}. The detector adopts a score function based on the watermark key and the reweight strategy, which exhibits statistical bias between the watermarked and unwatermarked token sequences.

\subsection{Autoregressive audio generation models}

Autoregressive audio models typically begin by encoding raw waveforms into self-supervised feature representations, such as those produced by wav2vec \citep{baevski2020wav2vec} or HuBERT \citep{hsu2021hubert}. These continuous features are then quantized into discrete semantic units. In contrast to text-where tokens generally correspond to reversible character or subword units \citep{sennrich-etal-2016-neural}-an audio waveform is segmented into overlapping frames and processed by a feature extractor. To discretize the feature space, a clustering algorithm (e.g., $k$-means) partitions it into $N_u$ clusters, with each frame assigned to the nearest cluster centroid. The resulting sequence of centroid indices serves as “audio tokens” for the language model. During inference, these discrete tokens are transformed back into a waveform using a neural vocoder \citep{kong2020hifi}. In multimodal text–audio LLMs, the overall vocabulary size is $N = N_u + N_t$, where $N_t$ denotes the number of text tokens \citep{zhang2023speechgpt, nguyen2024spirit}.

\paragraph{Retokenization mismatch.} Let $\x$ denote the token sequence output by the audio generation model, where the final audio is obtained by decoding through the token-audio decoder $D(\x)$, referred to as vocoder. For watermark detection, the generated audio is passed through the audio-token encoder $E(D(\x))$, and a hypothesis test is conducted on $E(D(\x))$ to identify the presence of a watermark signal. Nonetheless, $E(D(\x))$ often differs from the original $\x$, which weakens the statistical evidence captured by the detection scores. This discrepancy can be viewed as an unavoidable token-level perturbation occurring during the detection process. As shown in Figure~\ref{fig:retokenization_problem}, there is a large gap between the detectability of the raw token sequence $\x$ and the retokenized sequence $E(D(\x))$.

\section{Methodology}
To address the retokenization mismatch issue in autoregressive audio generation models, we developed \methodname, a clustering-based watermarking framework. In \methodname, audio tokens are initially segmented into clusters. Subsequently, we introduce a novel, distortion-free reweight strategy, aligned inverse sampling, tailored specifically for these clustered tokens. During watermark generation, this reweight strategy alters the output distribution according to the clusters. For watermark detection, the detector verifies whether the current token corresponds to the expected cluster, as determined by the reweight strategy and the embedded watermark code. Since mismatched tokens are likely within the same cluster, the detector can still accurately capture the correct statistical signal, even in the presence of retokenization mismatches.

In this section, we first present the clustering method for tackling retokenization mismatch and a distortion-free reweighting strategy, aligned inverse sampling, which is specifically devised for the clustered tokens. Then, we introduce our general watermark algorithm and detection statistic.

\subsection{Audio token clustering}
\label{sec:clustering}

During clustering, the objective is to split all audio tokens into distinct clusters based on their similarity. It is crucial to note that the clustering algorithm is executed only once for each audio generation model, its results are stored, and can be accessed during watermark generation and detection. Consequently, this approach does not increase the computational cost during sampling. To achieve this, we collect the audio token embeddings $\{e_1,...,e_m\}$ based on the token-audio encoder $E$ and use $k$-means algorithm to generate the corresponding clusters $\{c_1,...,c_h\}$. We employ $k$-means since the discretization process uses the Euclidean norm of audio feature vectors to discretize them, by selecting the nearest token embedding vector. While acessing the embedding vectors in open-source models is trivial, a closed source model provider can still implement our watermarking method and expose its detection API.

With our segmentation strategy, mismatched tokens are more likely to be in the same cluster because they are supposed to share similar embeddings in the encoder $E$. The watermark generator and detector can utilize the cluster information to avoid the detectability reduction caused by the retokenization mismatch. The inverse sampling watermark is a distortion-free method that can be applied directly to clustering scenarios.

To quantify the effectiveness of our clustering method in mitigating retokenization mismatch, we report the token mismatch rates before and after applying clustering on SpiritLM across multiple datasets in Table~\ref{tab:retokenization_clustering}. The mismatch rate is computed by comparing the original response tokens with those obtained after a decode-then-encode retokenization process. As shown in the table, our clustering method significantly reduces the mismatch rate, demonstrating its effectiveness in aligning tokens during watermarking generation and detection.

We have also evaluated other popular clustering methods, e.g., (Gaussian mixture models, spectral clustering) and found that they produce similar results. We did not explore methods beyond clustering, as it is the most natural and direct approach to address the retokenization mismatch. Empirically, clustering has proven effective in significantly reducing such mismatches.

\begin{table}[t]
    \caption{The effect of clustering on retokenization mismatch across datasets for SpiritLM.}
    \label{tab:retokenization_clustering}
    \centering
    \begin{tabular}{lccc}
    \toprule
        Dataset & Mismatch Rate Before & Mismatch Rate After & Reduction (\%) \\
        \midrule
        MMW Book Report & 0.3749 & 0.2117 & 43.55\% \\
        MMW Story & 0.3652 & 0.2174 & 40.47\% \\
        MMW Fake News & 0.4295 & 0.2300 & 46.44\% \\
        Dolly CW & 0.3634 & 0.2134 & 41.30\% \\
        Longform QA & 0.3757 & 0.2109 & 43.85\% \\
        Finance QA & 0.3587 & 0.2133 & 40.54\% \\
    \bottomrule
    \end{tabular}
\end{table}

\subsection{Aligned inverse sampling}

\paragraph{Inverse Sampling.} Let $c_1, \ldots, c_h$ represent the identified clusters, and let $\Pr(c_1), \ldots, \Pr(c_h)$ denote the sum of the token probabilities within them, such that $\Pr(c_i) = \sum_{x \in c_i} P_M(x)$. It is natural to map these probabilities onto the interval [0,1] and employ inverse sampling to pseudo-randomly select a number $r(\theta)\in[0,1]$ seeded by the watermark code $\theta$. The subsequent token is then sampled from cluster $c_i$ if $r$ falls within the interval $[\sum_{j=0}^{i-1}\Pr(c_j), \sum_{j=0}^{i}\Pr(c_j)]$, with $\Pr(c_0) := 0$. During detection, cluster information and the pseudo-random number $r(\theta)$ can be reconstructed using the watermark code $\theta$, allowing us to compute a statistical score by comparing the cluster against $r(\theta)$.

However, since token probabilities are unknown during watermark detection, we cannot ascertain the cluster probabilities $\Pr(c_1), \ldots, \Pr(c_h)$. This discrepancy leads to an alignment issue between $r(\theta)$ and the identified cluster $c_i$, as it becomes uncertain whether $r(\theta)$ is within the interval $[\sum_{j=0}^{i-1}\Pr(c_j), \sum_{j=0}^{i}\Pr(c_j)]$. To address this, \citet{christ2023undetectable,kuditipudi2023robust} proposed a position-based statistical score for watermark detection. The underlying principle is that if $r(\theta)$ is close to 1, then the index of the selected cluster during watermark generation is likely close to $h$ (the end of the clusters). However, the detection methods proposed in \citet{kuditipudi2023robust} cannot provide a theoretical guarantee of the false positive rate.
Besides, this position-based score has shown low detectability, as evidenced in \cite{kuditipudi2023robust} and our experimental findings. To address the flaws of inverse sampling, we introduce aligned inverse sampling, which substantially improves detection efficiency while providing a provable guarantee on the false-positive rate.

\paragraph{Aligned Inverse Sampling.} Consider the scenario where $\Pr(c_1) = \cdots = \Pr(c_h) = \frac{1}{h}$.
In this case, the pseudo-random number $r$ and the identified cluster $c_i$ are aligned. Specifically, if $r(\theta) \in \left[\frac{i-1}{h}, \frac{i}{h}\right]$, the detector can confidently assert that $c_i$ was generated through inverse sampling with $r(\theta)$. A natural enhancement to inverse sampling involves rearranging the cluster probabilities within the interval [0,1] to emulate this aligned scenario.

Details on this method are in Algorithm~\ref{alg:aligned}.

\begin{theorem}
    Aligned inverse sampling is a distortion-free watermark.
\end{theorem}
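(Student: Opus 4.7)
The plan is to verify the defining identity $\mathbb{E}_{\theta \sim P_\Theta}[P_W(x \mid \x_{1:n}, \theta)] = P_M(x \mid \x_{1:n})$ token-by-token. I would fix an arbitrary target token $y \in V$ and let $c_{i(y)}$ denote the cluster it belongs to. Under Algorithm~\ref{alg:aligned}, the generation process factors as (a) a cluster-selection step that picks some cluster $c_j$ via the rearranged mass layout on $[0,1]$ together with the pseudo-random $r(\theta)$, and (b) a within-cluster step that selects a specific token from the chosen cluster. I would first argue that the within-cluster step is implemented so that, conditioned on cluster $c_j$ being selected, a token $y \in c_j$ is emitted with probability $P_M(y)/\Pr(c_j)$, which is precisely the $P_M$-conditional. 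This reduces the statement to the cluster-level identity $\mathbb{E}_{\theta}[q(c_j \mid \theta)] = \Pr(c_j)$ for every $j$, where $q(c_j \mid \theta)$ is the code-conditional probability that cluster $c_j$ is selected.

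Next I would open up the cluster-selection step. Aligned inverse sampling rearranges cluster masses on $[0,1]$ using an auxiliary piece of randomness extracted from $\theta$ (a permutation, or equivalently a random assignment of clusters to the length-$1/h$ aligned blocks) and then performs standard inverse sampling with $r(\theta)$. For each fixed $\theta$, this defines a measurable subset $A_j(\theta) \subset [0,1]$ such that $c_j$ is emitted iff $r(\theta) \in A_j(\theta)$. Because $r(\theta)$ is marginally uniform on $[0,1]$ by the assumption on $P_\Theta$, we have $q(c_j \mid \theta) = |A_j(\theta)|$, so the target identity becomes $\mathbb{E}_\theta[|A_j(\theta)|] = \Pr(c_j)$.

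To establish this measure identity I would appeal to the conservation-of-mass property built into the rearrangement: even though the layout shuffles cluster intervals around the aligned $1/h$-blocks as $\theta$ varies, the rearrangement is constructed so that the total Lebesgue measure allotted to cluster $c_j$ equals $\Pr(c_j)$ after averaging over the auxiliary randomness. Exchanging the expectation over $\theta$ with the Lebesgue integral on $[0,1]$ via Fubini then gives $\mathbb{E}_\theta[|A_j(\theta)|] = \Pr(c_j)$. Multiplying the cluster-level identity by the within-cluster conditional yields $\mathbb{E}_\theta[P_W(y \mid \theta)] = \Pr(c_{i(y)}) \cdot P_M(y)/\Pr(c_{i(y)}) = P_M(y)$, which is exactly the distortion-free condition.

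The main obstacle will be the bookkeeping in the measure-preservation step: in vanilla inverse sampling each cluster occupies a single contiguous interval of length $\Pr(c_j)$ and the identity is immediate, but the aligned variant deliberately fragments cluster masses in order to snap them onto the uniform $1/h$-grid that the detector uses. One must check that after this fragmentation no mass is duplicated, lost, or leaked across clusters once $\theta$ is marginalized out. Once this invariance of the marginal per-cluster measure is pinned down — essentially a symmetry argument over the auxiliary permutation — the rest of the proof collapses to the short expectation calculation above.
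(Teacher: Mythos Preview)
Your decomposition into cluster-selection plus within-cluster sampling is sound, and the key identity you target --- that the Lebesgue measure of the region mapped to cluster $c_j$ equals $\Pr(c_j)$ --- is exactly what the paper leans on. The paper's proof is the one-line observation ``aligned inverse sampling does not modify the cluster probability,'' which is precisely your $|A_j| = \Pr(c_j)$ claim followed by the standard inverse-sampling argument.

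Where you diverge is in positing ``an auxiliary piece of randomness extracted from $\theta$ (a permutation, or equivalently a random assignment of clusters to the length-$1/h$ aligned blocks).'' Look again at Algorithm~\ref{alg:aligned}: the rearrangement is fully deterministic given the cluster probabilities (clusters are sorted max-to-min, and the fill-in of overlapped mass is done in a fixed order). The watermark code $\theta$ enters only through the uniform draw $r(\theta)$, not through any permutation of the layout. Consequently $A_j(\theta) = A_j$ does not depend on $\theta$ at all, and $|A_j| = \Pr(c_j)$ holds pointwise by the bookkeeping you yourself sketch (large clusters contribute $1/h$ in their own block plus exactly $\Pr(c_j)-1/h$ of fill elsewhere; small clusters contribute $\Pr(c_j)$ in their block and nothing else). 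The ``symmetry argument over the auxiliary permutation'' you flag as the main obstacle is therefore a phantom: no averaging over $\theta$ is needed at the cluster level, and the Fubini step is vacuous. Your argument still goes through --- a deterministic rearrangement is trivially the special case where the auxiliary randomness is a point mass --- but you are making the proof harder than it is.
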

\begin{proof}
    The proof is straightforward. As aligned inverse sampling does not modify the cluster probability, by the property of inverse sampling, we have $P_M(x)=\bbE_{\theta}[P_W(x|\theta)]$.
\end{proof}

With the aforementioned probability assignments, we can define a statistical score for the detector based on the pseudo-random number $r$ and the current token $x$.
\begin{definition}\label{def:detection score}
    Given the pseudo-random number $r$ and the current token $x$, the detection score $s(r,x)$ is defined as
    \begin{equation}
   s(r,x) := \left\{\begin{array}{cl}
  1 & \text{if } r \in \left[\frac{i-1}{h}, \frac{i}{h}\right], \text{ where } x \in c_i, \\
  0 & \text{otherwise}.
   \end{array}\right.
    \end{equation}
\end{definition}
Note, when $\Pr(c_i) < \frac{1}{h}$, aligned inverse sampling may sample tokens from clusters other than $c_i$ even if the pseudo-random number $r$ falls within $\left[\frac{i-1}{h}, \frac{i}{h}\right]$. This can reduce detection accuracy. However, empirical results indicate that the overall detectability of aligned inverse sampling still surpasses that of the regular inverse sampling reweight strategy.

\begin{algorithm}[t]
\caption{\methodname\ generator.}\label{alg:generator}
\begin{algorithmic}[1]
\State \textbf{Input:} secret key $\textsf{key}$, prompt $\bm{x}_{-m:0}$, generate length $t\in\mathbb{N}$, token-audio decoder $D$.
\State Initialize watermark code history $hist$.
\For{$i=1,\dots,t$}
   \State Calculate the token distribution for generating the $i$-th token $P_M(\cdot\mid\bm{x}_{-m:i-1})$.
   \State Generate a watermark code $\theta_{i} = ($\textsf{key}$,\x_{i-n,i-1})$.
   \If{$\theta_i\in hist$}
   \State Sample the next token $x_{i}$ using original distribution $P_M(\cdot|\x_{-m:i-1})$
   \Else
   \State Generate the pseudo-random number $r(\theta_i)$.
   \State Calculate watermarked distribution $P_W(\cdot|\x_{-m:i-1})$ via aligned inverse sampling.
    \State Sample the next token $x_{i}$ using distribution $P_W(\cdot|\x_{-m:i-1})$.
    \EndIf
\EndFor
\State \textbf{return} audio waveform $D(\bm{x}_{1:t})$.
\end{algorithmic}
\end{algorithm}

\begin{algorithm}[t]
\caption{\methodname\ detector.}\label{alg:detector}
\begin{algorithmic}[1]
\State \textbf{Input:} audio waveform $a$, audio-token encoder $E$, secret key \textsf{key}, score function $s$, threshold $z$.
\State calculate token sequence $\x_{1:t}=E(a)$
\State Initialize the score function: $S=0$.
    \For{$i = 2,...,t$}
    \State Generate the watermark code $k_{i} = (\textrm{key},\x_{i-n,i-1})$.
    \State Generate the pseudo-random number $r(\theta_i)$.
    \State Update the score function via $S = S + s(r(\theta_i),x_i)$.
    \EndFor
\State \textbf{return} $S>z$.
\end{algorithmic}
\end{algorithm}
\subsection{\methodname}

With the aligned inverse sampling reweight strategy, we can construct our watermarking algorithm \methodname. \methodname\ consists of a watermark generator and a watermark detector. In the watermark generator, the watermarked token at step $t$ is generated using a secret key $\textsf{key}$ and the prior n-gram content $\x_{t-n,t-1}$ as the watermark code $\theta_t$. The pseudo-random number $r_t(\theta_t)$ is then generated based on $\theta_t$. Following the inverse sampling reweight strategy, the cluster $c_{i(\theta_t)}$ is selected based on $r_t(\theta_t)$. The next token $x_t$ is sampled by randomly selecting a token within $c_{i(\theta_t)}$ according to the token probability of the original language model $P_M(\cdot|\x_{1:t-1})$. Following~\citet{hu2023unbiased}, we use a watermark code history $hist$ to ensure the distortion-freeness for multiple generation. If the current watermark code is in $hist$, we will sample from the original model's distribution instead of the watermarked distribution. The algorithm is detailed in Alg.~\ref{alg:generator}.

During detection, access to the generated audio and the watermark key $\textsf{key}$ is required. The token sequence $\x_{1:t}$ is first recovered using the encoder $E$ of the audio generation model. Then, for each $i = 1, \ldots, t$, the watermark code $\theta_i$ is generated based on the watermark key and the n-gram content. Subsequently, the pseudo-random number $r_i(\theta_i)$ is recovered based on $\theta_i$. Next, we calculate the statistical score $s(r_i(\theta_i), x_i)$ following Definition~\ref{def:detection score}. The final statistic is given by $S(\x_{1:t}) = \sum_{i=1}^{t} s(r_i(\theta_i), x_i)$.

Under the null hypothesis, $S(\x_{1:t})$ follows a binomial distribution with a success rate of $\frac{1}{h}$. Thus, we have the following tail bound derived from the Hoeffding's inequality:
$\Pr(S(\x_{1:t}) \geq k) \leq \exp(-2t(\frac{1}{h}-\frac{k}{t})^2)$. By setting a threshold on the false positive rate (e.g. FPR=1\%), we can calculate a threshold $z$ by solving $\exp(-2t(\frac{1}{h}-\frac{z}{t})^2)=0.01$. If the score $S(\x_{1:t})$ is greater than $z$, we reject the null hypothesis and claim that the sentence is watermarked. The detection algorithm is in Alg.~\ref{alg:detector}.

\begin{table}[t]
\centering
\caption{Detectability comparison of watermarking methods on MMW Book Report, MMW Story, and MMW Fake News with SpiritLM. We report true positive rate at 1\% and 0.1\% false positive rate and the median p-value.}
\label{tab:detectability1}
\resizebox{1.0\linewidth}{!}{
\begin{tabular}{lccccccccc}
\toprule
 & \multicolumn{3}{c}{MMW Book Report}    & \multicolumn{3}{c}{MMW Story} & \multicolumn{3}{c}{MMW Fake News} \\ \midrule
\multirow{2}{*}{Method} & \multicolumn{2}{c}{TPR@FPR} & \multirow{2}{*}{Median $p$-value} & \multicolumn{2}{c}{TPR@FPR} & \multirow{2}{*}{Median $p$-value} & \multicolumn{2}{c}{TPR@FPR} & \multirow{2}{*}{Median $p$-value} \\ \cmidrule{2-3} \cmidrule{5-6} \cmidrule{8-9}
 & 1\%& 0.1\%   && 1\%& 0.1\%   && 1\%& 0.1\%   &\\ \midrule
KGW($\delta$=1.0)  & 0.75    & 0.40    & 0.002 & 0.66    & 0.42    & 0.004 & 0.63    & 0.38    & 0.003 \\
KGW($\delta$=1.5)  & 0.93    & 0.76    & 1.1e-05    & 0.96    & 0.90    & 1.2e-05    & 0.87    & 0.71    & 1.3e-04    \\
KGW($\delta$=2.0)  & 0.97    & 0.94    & 2.8e-07    & 0.97    & 0.89    & 5.1e-07    & 0.92    & 0.86    & 2.5e-06    \\
Unigram($\delta$=1.0)   & 0.09    & 0.00    & 0.241 & 0.06    & 0.02    & 0.241 & 0.08    & 0.01    & 0.198 \\
Unigram($\delta$=1.5)   & 0.27    & 0.12    & 0.043 & 0.28    & 0.11    & 0.042 & 0.37    & 0.20    & 0.040 \\
Unigram($\delta$=2.0)   & 0.56    & 0.29    & 0.006 & 0.55    & 0.30    & 0.005 & 0.53    & 0.28    & 0.007 \\ \midrule
$\gamma$-reweight  & 0.68    & 0.41    & 0.003 & 0.53    & 0.30    & 0.006 & 0.68    & 0.43    & 0.002 \\
DiPmark($\alpha$=0.3)   & 0.55    & 0.27    & 0.009 & 0.35    & 0.21    & 0.029 & 0.44    & 0.26    & 0.017 \\
DiPmark($\alpha$=0.4)   & 0.63    & 0.41    & 0.002 & 0.57    & 0.33    & 0.004 & 0.61    & 0.34    & 0.003 \\
ITS & 0.77    & 0.64    & 2.0e-04    & 0.82    & 0.70    & 2.0e-04    & 0.83    & 0.74    & 2.0e-04    \\
\midrule
\methodname & 0.88 & 0.77 & 8.0e-05 & 0.92 & 0.80 & 4.6e-05 & 0.97 & 0.82 & 4.9e-05 \\
\bottomrule
\end{tabular}
}
\end{table}

\begin{table}[t]
\centering
\caption{Detectability comparison of watermarking methods on Dolly CW, Longform QA, and Librispeech with SpiritLM. We report true positive rate at 1\% and 0.1\% false positive rate and the median p-value.}
\label{tab:detectability2}
\resizebox{1.0\linewidth}{!}{
\begin{tabular}{lccccccccc}
\toprule
    & \multicolumn{3}{c}{Dolly CW}& \multicolumn{3}{c}{Longform QA}    & \multicolumn{3}{c}{Librispeech}\\ \midrule
\multicolumn{1}{l}{\multirow{2}{*}{Method}} & \multicolumn{2}{c}{TPR@FPR} & \multirow{2}{*}{Median $p$-value} & \multicolumn{2}{c}{TPR@FPR} & \multirow{2}{*}{Median $p$-value} & \multicolumn{2}{c}{TPR@FPR} & \multirow{2}{*}{Median $p$-value} \\ \cmidrule{2-3} \cmidrule{5-6} \cmidrule{8-9}
\multicolumn{1}{c}{} & 1\%& 0.1\%   && 1\%& 0.1\%   && 1\%& 0.1\%   &\\ \midrule
KGW($\delta$=1.0)    & 0.69    & 0.44    & 0.003 & 0.75    & 0.44    & 0.001 & 0.69    & 0.42    & 0.002 \\
KGW($\delta$=1.5)    & 0.93    & 0.83    & 2.4e-05    & 0.95    & 0.84    & 1.8e-05    & 0.97    & 0.86    & 1.3e-05    \\
KGW($\delta$=2.0)    & 0.98    & 0.90    & 6.0e-07    & 0.99    & 0.92    & 3.4e-07    & 0.99    & 0.97    & 9.7e-08    \\
Unigram($\delta$=1.0)& 0.10    & 0.02    & 0.268 & 0.15    & 0.06    & 0.164 & 0.06    & 0.01    & 0.268 \\
Unigram($\delta$=1.5)& 0.32    & 0.16    & 0.057 & 0.40    & 0.17    & 0.021 & 0.25    & 0.07    & 0.060 \\
Unigram($\delta$=2.0)& 0.43    & 0.23    & 0.016 & 0.56    & 0.32    & 0.006 & 0.48    & 0.23    & 0.012 \\ \midrule
$\gamma$-reweight    & 0.63    & 0.43    & 0.002 & 0.56    & 0.35    & 0.005 & 0.71    & 0.46    & 0.001 \\
DiPmark($\alpha$=0.3)& 0.45    & 0.19    & 0.019 & 0.50    & 0.25    & 0.010 & 0.60    & 0.31    & 0.006 \\
DiPmark($\alpha$=0.4)& 0.58    & 0.34    & 0.005 & 0.53    & 0.29    & 0.009 & 0.69    & 0.45    & 0.002 \\
ITS & 0.79    & 0.62    & 4.0e-04    & 0.86    & 0.73    & 2.0e-04    & 0.90    & 0.83    & 2.0e-04    \\
\midrule
\methodname & 0.92 & 0.80 & 2.3e-05 & 0.96 & 0.82 & 1.6e-05 & 0.95 & 0.84 & 6.7e-06 \\
\bottomrule
\end{tabular}
}
\end{table}

\section{Experiments}

\paragraph{Baselines.} We evaluate the performance of our methods against various statistical watermarking baselines, including two biased watermarking approaches, KGW~\citep{kirchenbauer2023watermark} and Unigram~\citep{zhao2023provable}, as well as three unbiased watermarking algorithms, $\gamma$-reweight~\citep{hu2023unbiased}, DiPmark~\citep{wu2023dipmark}, and ITS-edit~\citep{kuditipudi2023robust}. We also compare our in-generation method with state-of-the-art post hoc watermarking methods in Appendix~\ref{sec:posthoc}.

\paragraph{Models and Datasets.}
We evaluate our watermarking approach \methodname on raw audios generated by text-speech aligned conversational models. This allows us for flexibility to prompt speech generation with both text and speech prompts. We use the models SpiritLM~\citep{nguyen2024spirit} and SpeechGPT~\citep{zhang2023speechgpt} for the speech generation tasks. For text prompting, we follow \citet{kirchenbauer2023watermark,hu2023unbiased} and include three MMW datasets~\citep{piet2023mark}, Dolly CW~\citep{DatabricksBlog2023DollyV2}, and two tasks from WaterBench~\citep{tu2023waterbench}. For speech prompting, we use the validation set of LibriSpeech \citep{panayotov2015librispeech}.

\paragraph{Watermarking parameters.} We evaluate the detectability of \methodname\ on the speech generation task with different audio generation models. We generate 500 examples for each task. We use the prefix 1-gram together with a secret key as the watermark keys. We select $\alpha \in\{ 0.3, 0.4\}$ for DiPmark, and $\delta \in \{1.0, 1.5, 2.0\}$ and $\gamma=0.5$ for KGW watermark \citep{kirchenbauer2023watermark}, $\delta \in \{1.0, 1.5, 2.0\}$ for Unigram~\citep{zhao2023provable}. For \methodname, we partition the token-embedding space into 20 clusters using the $k$-means algorithm, then perform linear sum assignment to ensure that the resulting centroids are sufficiently separated to accommodate potential retokenization errors. We justify the choice of $h=20$ in Appendix~\ref{sec:n_clusters}. All experiments are conducted on a NVIDIA A6000 GPU.

\begin{table}
\centering
\caption{Detectability comparison of watermarking methods on Dolly CW, Longform QA, and Finance QA with SpeechGPT. We report true positive rate at 1\% and 0.1\% false positive rate and the median p-value.}
\label{tab:detectability3}
\resizebox{1.0\linewidth}{!}{
\begin{tabular}{lccccccccc}
\toprule
& \multicolumn{3}{c}{Dolly CW} & \multicolumn{3}{c}{Longform QA}& \multicolumn{3}{c}{Finance QA} \\ \midrule
\multirow{2}{*}{Method} & \multicolumn{2}{c}{TPR@FPR} & \multirow{2}{*}{Median $p$-value} & \multicolumn{2}{c}{TPR@FPR} & \multirow{2}{*}{Median $p$-value} & \multicolumn{2}{c}{TPR@FPR} & \multirow{2}{*}{Median $p$-value} \\ \cmidrule{2-3} \cmidrule{5-6} \cmidrule{8-9}
& 1\%& 0.1\%   && 1\%& 0.1\%   && 1\%& 0.1\%   &\\ \midrule
KGW($\delta$=1.0) & 0.32    & 0.14    & 0.048 & 0.40    & 0.19    & 0.023 & 0.36    & 0.21    & 0.024 \\
KGW($\delta$=1.5) & 0.56    & 0.36    & 0.005 & 0.72    & 0.48    & 0.001 & 0.68    & 0.52    & 8.2e-04    \\
KGW($\delta$=2.0) & 0.73    & 0.51    & 7.4e-04    & 0.84    & 0.69    & 4.2e-05    & 0.80    & 0.70    & 7.5e-05    \\
Unigram($\delta$=1.0)  & 0.17    & 0.06    & 0.122 & 0.25    & 0.10    & 0.054 & 0.21    & 0.07    & 0.061 \\
Unigram($\delta$=1.5)  & 0.40    & 0.21    & 0.023 & 0.55    & 0.26    & 0.006 & 0.64    & 0.37    & 0.003 \\
Unigram($\delta$=2.0)  & 0.57    & 0.38    & 0.004 & 0.71    & 0.54    & 6.1e-04    & 0.72    & 0.59    & 3.0e-04    \\ \midrule
$\gamma$-reweight & 0.37    & 0.15    & 0.047 & 0.46    & 0.24    & 0.015 & 0.56    & 0.32    & 0.007 \\
DiPmark($\alpha$=0.3)  & 0.23    & 0.11    & 0.089 & 0.37    & 0.12    & 0.027 & 0.33    & 0.17    & 0.032 \\
DiPmark($\alpha$=0.4)  & 0.29    & 0.12    & 0.049 & 0.46    & 0.24    & 0.014 & 0.54    & 0.25    & 0.008 \\
\midrule
\methodname & 0.82 & 0.69 & 5.6e-05  & 0.95 & 0.90 & 1.8e-0 & 0.94 & 0.89 & 1.7e-07 \\
\bottomrule
\end{tabular}
}
\end{table}

\begin{table}[t]
\centering
\caption{Robustness comparison of watermarking methods for Longform QA with SpiritLM under signal processing attacks. We report TPR at 1\% FPR. `Dist.' refers distorted watermarks and `Dist. free' refers distortion-free watermarks.}
\label{tab:robust_1_longform_qa_spiritlm}
\scalebox{0.93}{
\begin{tabular}{ll|ccccccc}
\toprule
 & Watermark & \begin{tabular}[c]{@{}c@{}}No\\ attack\end{tabular} & \begin{tabular}[c]{@{}c@{}}Echo\\ (0.05sec)\end{tabular} & \begin{tabular}[c]{@{}c@{}}Gauss. noise\\ (30dB)\end{tabular} & \begin{tabular}[c]{@{}c@{}}Lowpass\\ (40\%)\end{tabular} & \begin{tabular}[c]{@{}c@{}}Smooth\\ (6 samp.)\end{tabular} & \begin{tabular}[c]{@{}c@{}}Speed\\ (0.9)\end{tabular} & \begin{tabular}[c]{@{}c@{}}Speed\\ (1.1)\end{tabular} \\ \hline
\multirow{6}{*}{Dist.}
 & KGW($\delta$=1.0) & 0.75 & 0.64 & 0.77 & 0.77 & 0.80 & 0.13 & 0.11 \\
 & KGW($\delta$=1.5) & 0.95 & 0.95 & 0.97 & 0.96 & 0.96 & 0.24 & 0.27 \\
 & KGW($\delta$=2.0) & 0.99 & 0.97 & 0.99 & 0.99 & 0.99 & 0.35 & 0.35 \\
 & Unigram($\delta$=1.0) & 0.15 & 0.15 & 0.03 & 0.16 & 0.11 & 0.01 & 0.02 \\
 & Unigram($\delta$=1.5) & 0.40 & 0.41 & 0.13 & 0.39 & 0.30 & 0.07 & 0.05 \\
 & Unigram($\delta$=2.0) & 0.56  & 0.57 & 0.30 & 0.56 & 0.47 & 0.13 & 0.08 \\
\midrule
\multirow{5}{*}{\begin{tabular}[c]{@{}l@{}}Dist. \\free \end{tabular}}
 & $\gamma$-reweight & 0.56 & 0.56 & 0.44 & 0.58 & 0.56 & 0.17 & 0.09 \\
 & DiP($\alpha$=0.3) & 0.50 & 0.50 & 0.35 & 0.48 & 0.49 & 0.09 & 0.10 \\
 & DiP($\alpha$=0.4) & 0.53 & 0.52 & 0.39 & 0.54 & 0.49 & 0.14 & 0.10 \\
 & ITS & 0.86 & 0.85 & 0.79 & 0.88 & 0.85 & 0.06 & 0.00 \\
\cmidrule{2-9}
 & \methodname & 0.96 & 0.93 & 0.93 & 0.94 & 0.94 & 0.31 & 0.28 \\
\bottomrule
\end{tabular}
}
\vspace{-0.5cm}
\end{table}

\subsection{Detectability}
Following the evaluation metric of the previous works~\citep{kirchenbauer2023watermark,wu2023dipmark}, we report the true positive rate at guaranteed false positive rates, i.e., TPR@FPR=$\{1\%, 0.1\%\}$. Notice, as the detectors of ITS-edit do not provide a theoretical guarantee, we report the true positive rate at the estimated false positive rate following their detecting algorithms.

From Table~\ref{tab:detectability1},~\ref{tab:detectability2}, and~\ref{tab:detectability3} we see that \methodname\ achieved the best detectability comparing with all other unbiased watermarks, at least 10\% improvement on all TPR@FPR metrics. Besides, \methodname\ outperformed the biased watermarking algorithm KGW and Unigram in most cases, and achieved comparable performance with the strongly biased KGW($\delta$=2.0).

\paragraph{Time Efficiency.}
Similar to KGW, Unigram, and DiPmark watermarking approaches, the minimal computational overhead introduced by the \methodname\ generator occurs solely during the adjustment of token probabilities in the generation stage, and can be elegantly impelemented as a logits processor in popular deep learning framweorks.
Moreover, the \methodname\ detector is model-agnostic.

\subsection{Robustness}
Audio in the wild is subject to various modifications. Hosting platforms use codecs for efficiency, and users employ editing software either for recreational purposes, or specifically to erase watermarking signals.
To evaluate the robustness of \methodname\ under realistic channel conditions, we apply a diverse suite of \textbf{thirteen} single-channel audio attacks. We follow the no-box attacks of AudioMarkBench \citep{liu2024audiomarkbench}, which includes common signal porcessing modifications (time dilation, echo, Gaussian noise addition, lowpass filtering, temporal smoothing, quantization), popular audio codings (Opus, EnCodec \citep{defossezhigh}, and MP3), as well as the denoising attack of \citet{lopez2024speech}. The detailed settings are in Section~\ref{sec:attack_settings}.

The results are summarized in Table~\ref{tab:robust_1_longform_qa_spiritlm} and Table~\ref{tab:robust_2_longform_qa_spiritlm}. We report the true positive rate at a fixed false positive rate of 1\% for each watermark across a range of attack types and strengths. \methodname\ consistently exhibits the strongest robustness, outperforming all distortion‑free watermarking baselines in reliably detecting watermarked audio under adversarial conditions. Experiments on two additional datasets per model are included in \ref{sec:robustness_results}.

\begin{table}[t]
\centering
\caption{Robustness comparison of watermarking methods for Longform QA with SpiritLM under codec-based, quantizing, and denoising attacks. We report TPR at 1\% FPR. `Dist.' refers distorted watermarks and `Dist. free' refers distortion-free watermarks.}
\label{tab:robust_2_longform_qa_spiritlm}
\scalebox{0.93}{
\begin{tabular}{ll|ccccccc}
\toprule
 & Watermark & \begin{tabular}[c]{@{}c@{}}EnCodec\\ (24kHz)\end{tabular} & \begin{tabular}[c]{@{}c@{}}MP3\\ (32kbps)\end{tabular} & \begin{tabular}[c]{@{}c@{}}MP3\\ (40kbps)\end{tabular} & \begin{tabular}[c]{@{}c@{}}Opus\\ (16kbps)\end{tabular} & \begin{tabular}[c]{@{}c@{}}Opus\\ (31kbps)\end{tabular} & \begin{tabular}[c]{@{}c@{}}Quant.\\ (64-bit)\end{tabular} & \begin{tabular}[c]{@{}c@{}}Denoise\end{tabular} \\ \hline
\multirow{6}{*}{Dist.}
 & KGW($\delta$=1.0) & 0.64 & 0.51 & 0.53 & 0.65 & 0.74 & 0.68 & 0.65 \\
 & KGW($\delta$=1.5) & 0.93 & 0.88 & 0.86 & 0.92 & 0.95 & 0.92 & 0.93 \\
 & KGW($\delta$=2.0) & 0.98 & 0.94 & 0.93 & 0.97 & 0.98 & 0.97 & 0.97 \\
 & Uni($\delta$=1.0) & 0.06 & 0.12 & 0.12 & 0.12 & 0.14 & 0.00 & 0.06 \\
 & Uni($\delta$=1.5) & 0.26 & 0.32 & 0.34 & 0.33 & 0.37 & 0.10 & 0.22 \\
 & Uni($\delta$=2.0) & 0.41 & 0.53 & 0.49 & 0.49 & 0.53 & 0.20 & 0.38 \\
\midrule
\multirow{5}{*}{\begin{tabular}[c]{@{}l@{}}Dist. \\free \end{tabular}}
 & $\gamma$-reweight & 0.51 & 0.42 & 0.46 & 0.54 & 0.56 & 0.57 & 0.52 \\
 & DiP($\alpha$=0.3) & 0.46 & 0.30 & 0.34 & 0.46 & 0.48 & 0.60 & 0.44 \\
 & DiP($\alpha$=0.4) & 0.49 & 0.41 & 0.41 & 0.47 & 0.51 & 0.56 & 0.48 \\
 & ITS & 0.77 & 0.74 & 0.78 & 0.80 & 0.88 & 0.63 & 0.71 \\
\cmidrule{2-9}
 & \methodname & 0.92 & 0.80 & 0.84 & 0.91 & 0.92 & 0.88 & 0.92 \\
\bottomrule
\end{tabular}
}
\end{table}
\begin{figure}[t]
  \centering
    \centering
    \includegraphics[width=\linewidth]{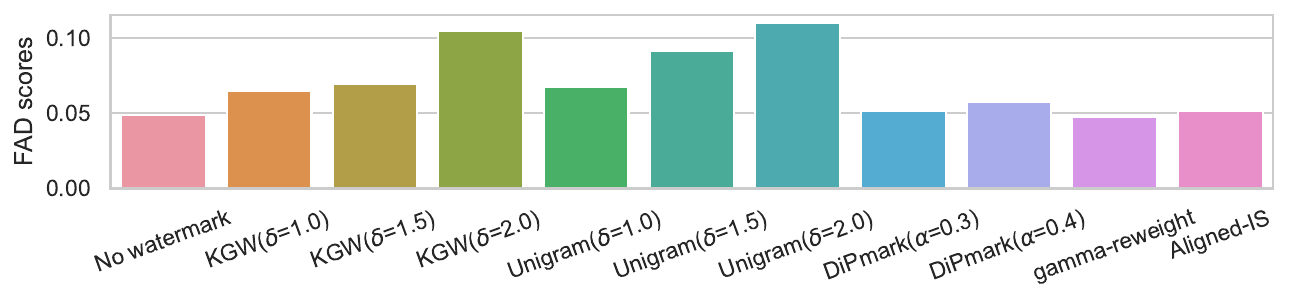}\\
    \centering
    \includegraphics[width=\linewidth]{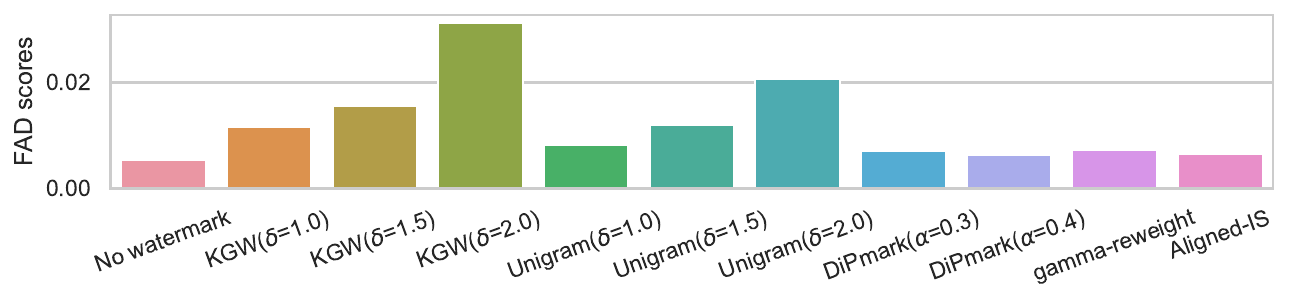}
    \vspace{-0.5cm}

  \caption{Audio-quality impact of watermarking methods. \textbf{Top:} FAD scores on the Dolly CW dataset evaluated with SpiritLM. \textbf{Bottom:} FAD scores on the LibriSpeech dataset evaluated with SpiritLM.}
  \label{fig:fad_scores}
  \vspace{-0.2cm}
\end{figure}
\subsection{Audio quality}
We employ non-intrusive speech quality metrics to validate unbiasedness, due to the inherent stochasticity of token-based statistical watermarking methods. Specifically, we utilize the Fréchet Audio Distance (FAD)~\citep{Kilgour2019FrchetAD}, a metric designed to directly quantify the divergence between two distributions. We compute the FAD scores between the watermarked audios and their unwatermarked counterparts generated from identical prompts and the unwatermarked model. As a baseline (no watermarking scenario), we generate two distinct sets of audio samples from the unwatermarked model, using the same prompts but different random seeds, and calculate the FAD scores between these two sets. Figure \ref{fig:fad_scores} shows that \methodname\ and the other distortion-free watermarking schemes preserve audio quality on par with the unwatermarked baseline, whereas distortion-based approaches such as KGW and Unigram noticeably degrade it.

\section{Conclusion}
\vspace{-0.2cm}

In conclusion, we propose \methodname, a novel distortion-free watermarking framework tailored specifically for autoregressive audio generation models. Leveraging aligned inverse sampling, \methodname\ ensures traceability and accountability in synthetic audio outputs without any degradation in audio quality. Comprehensive empirical evaluations across diverse datasets and audio generation architectures demonstrate the efficacy and robustness of our approach. Our method thus represents a meaningful advancement in watermarking technology, enhancing the security and integrity of synthetic audio and supporting trustworthy digital communication.

\bibliography{example_paper}
\bibliographystyle{icml2025}
\clearpage
\appendix
\section{Limitations}\label{sec:limitation}
\methodname\ method relies on the assumption that the retokenization mismatch is adequately captured by clustering. However, there is no formal guarantee that the clustering process fully captures all forms of retokenization errors, especially when new audio patterns or novel speech artifacts are introduced. Besides, for each new model, we need to perform the clustering of tokens, which introduces an  additional computational step.
\section{Missing Algorithms}
\begin{algorithm}[h]
\caption{Aligned inverse sampling.}\label{alg:aligned}
\begin{algorithmic}[1]
\State \textbf{Input:} Cluster probabilities $\Pr(c_1),...,\Pr(c_h)$, sorted from max to min probabilities. Watermark code $\theta$.
\State Initialize an overlapped\_dict to store the overlapped regions.
\State Initialize a cluster\_list and a prob\_list for inverse sampling.
\State \# \textit{Rearrange probabilities of clusters within [0,1].}
\For{$i=1,\dots,h$}
   \State cluster\_list.append($c_i$)
   \If{$\Pr(c_i)\geq1/h$}
  \State prob\_list.append($1/h$)
  \State \# \textit{Store the overlapped regions}
  \State overlapped\_dict.add(\{$c_i:\Pr(c_i)-1/h$\})
   \Else
  \State prob\_list.append($\Pr(c_i)$)
  \State diff=$1/h-\Pr(c_i)$
  \State \# \textit{Use the overlapped regions to fill the empty region}
  \While{diff$>0$}
 \For{$j\in$overlapped\_dict}
 \State cluster\_list.append(j)
 \State prob\_list.append($\min\{\textrm{diff},\textrm{overlapped\_dict}[
 \textrm{j}]$)
 \State diff=diff-overlapped\_dict[j]
 \State overlapped\_dict[j]=$\max\{0,\textrm{overlapped\_dict[j] - diff}\}$
 \EndFor
  \EndWhile
   \EndIf
\EndFor
\State \# \textit{Sampling from the rearranged interval.}
\State Pseudo-randomly sampling $r(\theta)\in$[0, 1] seeded by $\theta$.
\State Find $i$ s.t. $r(\theta)\in$[sum(prob\_list[0: i]), sum(prob\_list[0: i+1]))
\State Randomly sample the token $x$ (following their original probability) from cluster\_list[i]
\State \textbf{return} $x$
\end{algorithmic}
\end{algorithm}

\section{Comparison with post hoc methods}
\label{sec:posthoc}

We compare our in-generation method with the state-of-the-art post hoc watermarking methods AudioSeal \citep{san2024proactive} and WavMark \citep{chen2023wavmark}. Our findings indicate that, as established in existing literature, post hoc watermarking methods are neither robust, nor distortion-free. Statistical watermarking is a promising solution for embedding zero-bit watermarks to distinguish artificially generated audio. Specifically, our audio-aware method achieves a new state-of-the-art in robustness and distortion-freeness.

\subsection{Detectability \& Robustness}

First, we verify the results of \citet{odeep}, who identified that post hoc methods are robust to a few attacks, but extremely vulnerable to others, making them unpractical in real-world applications. Token-based statistical watermarking is more robust to a wide variety of attacks, since the watermarking information is embedded into the generated audio itself, in cotrast to overlaying mid-frequency content on the audio. We present our results in Tables \ref{tab:robust_1_longform_qa_spiritlm_posthoc} through \ref{tab:robust_2_librispeech_spiritlm_posthoc} for SpiritLM, and Tables \ref{tab:robust_1_longform_qa_speechgpt_posthoc} through \ref{tab:robust_2_finance_qa_speechgpt_posthoc} for SpeechGPT. Notice that post hoc methods are mostly robust to low-frequency perturbations (since they alter inaudible frequencies) and higher-bitrate codings, but are very fragile otherwise.

Detectability is reported as the TPR at 1\% FPR for \methodname, however the post hoc methods do not offer a theoretical detectability guarantee. For AudioSeal, the detection result is the probability of the audio being watermarked, so we report that probability thresholded at 99\%. For WavMark, a 16-bit payload is embedded into the audio, and detection returns either an empty, or a decoded payload. We consider successful detection when a payload is not empty, even if it has a non-zero bit error rate.

\begin{table}
\centering
\caption{Robustness comparison of post hoc watermarking methods for Longform QA with SpiritLM under signal processing attacks.}
\label{tab:robust_1_longform_qa_spiritlm_posthoc}
\begin{tabular}{l|ccccccc}
\toprule
 Watermark & \begin{tabular}[c]{@{}c@{}}No\\ attack\end{tabular} & \begin{tabular}[c]{@{}c@{}}Echo\\ (0.05sec)\end{tabular} & \begin{tabular}[c]{@{}c@{}}Gauss. noise\\ (30dB)\end{tabular} & \begin{tabular}[c]{@{}c@{}}Lowpass\\ (40\%)\end{tabular} & \begin{tabular}[c]{@{}c@{}}Smooth\\ (6 samp.)\end{tabular} & \begin{tabular}[c]{@{}c@{}}Speed\\ (0.9)\end{tabular} & \begin{tabular}[c]{@{}c@{}}Speed\\ (1.1)\end{tabular} \\ \hline
AudioSeal & 1.00 & 0.88  &  0.07 & 1.00 & 1.00 & 0.00 & 0.00 \\
WavMark & 1.00 &  1.00 &  1.00  & 1.00 & 1.00 & 1.00 & 1.00 \\
\midrule
\methodname & 0.96 & 0.93 & 0.93 & 0.94 & 0.94 & 0.31 & 0.28 \\
\bottomrule
\end{tabular}
\end{table}
\begin{table}
\centering
\caption{Robustness comparison of post hoc watermarking methods for Longform QA with SpiritLM under codec-based and quantizing attacks.}
\label{tab:robust_2_longform_qa_spiritlm_posthoc}
\begin{tabular}{l|cccccc}
\toprule
Watermark & \begin{tabular}[c]{@{}c@{}}EnCodec\\ (24kHz)\end{tabular} & \begin{tabular}[c]{@{}c@{}}MP3\\ (32kbps)\end{tabular} & \begin{tabular}[c]{@{}c@{}}MP3\\ (40kbps)\end{tabular} & \begin{tabular}[c]{@{}c@{}}Opus\\ (16kbps)\end{tabular} & \begin{tabular}[c]{@{}c@{}}Opus\\ (31kbps)\end{tabular} & \begin{tabular}[c]{@{}c@{}}Quant.\\ (64-bit)\end{tabular} \\ \hline
AudioSeal & 0.00 & 0.18  & 0.81  & 0.00 & 0.90 & 0.00  \\
WavMark & 0.00 &  1.00 & 1.00  & 0.96 & 1.00 & 0.06  \\
\midrule
\methodname & 0.92 & 0.80 & 0.84 & 0.91 & 0.92 & 0.88  \\
\bottomrule
\end{tabular}
\end{table}

\begin{table}
\centering
\caption{Robustness comparison of post hoc watermarking methods for LibriSpeech with SpiritLM under signal processing attacks.}
\label{tab:robust_1_librispeech_spiritlm_posthoc}
\begin{tabular}{l|ccccccc}
\toprule
 Watermark & \begin{tabular}[c]{@{}c@{}}No\\ attack\end{tabular} & \begin{tabular}[c]{@{}c@{}}Echo\\ (0.05sec)\end{tabular} & \begin{tabular}[c]{@{}c@{}}Gauss. noise\\ (30dB)\end{tabular} & \begin{tabular}[c]{@{}c@{}}Lowpass\\ (40\%)\end{tabular} & \begin{tabular}[c]{@{}c@{}}Smooth\\ (6 samp.)\end{tabular} & \begin{tabular}[c]{@{}c@{}}Speed\\ (0.9)\end{tabular} & \begin{tabular}[c]{@{}c@{}}Speed\\ (1.1)\end{tabular} \\ \hline
AudioSeal & 1.0 & 0.82  &   0.05 & 1.00 & 1.00 & 0.00 & 0.00 \\
WavMark & 1.00 & 1.00  & 1.00  & 1.00 & 1.00 & 1.00 & 1.00 \\
\midrule
\methodname & 0.95 & 0.92 & 0.90 & 0.93 & 0.92 & 0.28 & 0.21 \\
\bottomrule
\end{tabular}
\end{table}
\begin{table}
\centering
\caption{Robustness comparison of post hoc watermarking methods for LibriSpeech with SpiritLM under codec-based and quantizing attacks.}
\label{tab:robust_2_librispeech_spiritlm_posthoc}
\begin{tabular}{l|cccccc}
\toprule
Watermark & \begin{tabular}[c]{@{}c@{}}EnCodec\\ (24kHz)\end{tabular} & \begin{tabular}[c]{@{}c@{}}MP3\\ (32kbps)\end{tabular} & \begin{tabular}[c]{@{}c@{}}MP3\\ (40kbps)\end{tabular} & \begin{tabular}[c]{@{}c@{}}Opus\\ (16kbps)\end{tabular} & \begin{tabular}[c]{@{}c@{}}Opus\\ (31kbps)\end{tabular} & \begin{tabular}[c]{@{}c@{}}Quant.\\ (64-bit)\end{tabular} \\ \hline
AudioSeal & 0.00 & 0.18  &  0.84 & 0.00 &  0.89 & 0.00 \\
WavMark & 0.00 &  1.00 &  1.00 & 0.99 & 1.00 & 0.06  \\
\midrule
\methodname & 0.90 & 0.75 & 0.80 & 0.89 & 0.91 & 0.85 \\
\bottomrule
\end{tabular}
\end{table}

\begin{table}
\centering
\caption{Robustness comparison of post hoc watermarking methods for Longform QA with SpeechGPT under signal processing attacks.}
\label{tab:robust_1_longform_qa_speechgpt_posthoc}
\begin{tabular}{l|ccccccc}
\toprule
Watermark & \begin{tabular}[c]{@{}c@{}}No\\ attack\end{tabular} & \begin{tabular}[c]{@{}c@{}}Echo\\ (0.05sec)\end{tabular} & \begin{tabular}[c]{@{}c@{}}Gauss. noise\\ (30dB)\end{tabular} & \begin{tabular}[c]{@{}c@{}}Lowpass\\ (40\%)\end{tabular} & \begin{tabular}[c]{@{}c@{}}Smooth\\ (6 samp.)\end{tabular} & \begin{tabular}[c]{@{}c@{}}Speed\\ (0.9)\end{tabular} & \begin{tabular}[c]{@{}c@{}}Speed\\ (1.1)\end{tabular} \\ \hline
AudioSeal & 1.00 &  0.77 &  0.00 & 1.00 & 1.00 & 0.00 & 0.00 \\
WavMark & 1.00 & 1.00  & 1.00 & 1.00 & 1.00 & 1.00 & 1.00 \\
\midrule
\methodname & 0.95  & 0.93 & 0.93 & 0.95 & 0.92 & 0.38 & 0.22 \\
\bottomrule
\end{tabular}
\end{table}
\begin{table}
\centering
\caption{Robustness comparison of post hoc watermarking methods for Longform QA with SpeechGPT under codec-based and quantizing attacks.}
\label{tab:robust_2_longform_qa_speechgpt_posthoc}
\begin{tabular}{l|cccccc}
\toprule
Watermark & \begin{tabular}[c]{@{}c@{}}EnCodec\\ (24kHz)\end{tabular} & \begin{tabular}[c]{@{}c@{}}MP3\\ (32kbps)\end{tabular} & \begin{tabular}[c]{@{}c@{}}MP3\\ (40kbps)\end{tabular} & \begin{tabular}[c]{@{}c@{}}Opus\\ (16kbps)\end{tabular} & \begin{tabular}[c]{@{}c@{}}Opus\\ (31kbps)\end{tabular} & \begin{tabular}[c]{@{}c@{}}Quant.\\ (64-bit)\end{tabular} \\ \hline
AudioSeal & 0.00 & 0.02  & 0.05  & 0.00 & 0.26 & 0.00  \\
WavMark & 0.00 &  1.00 & 1.00  & 0.74 & 1.00 & 0.30  \\
\midrule
\methodname & 0.91 & 0.82 & 0.84 & 0.95 & 0.96 & 0.85  \\
\bottomrule
\end{tabular}
\end{table}

\begin{table}
\centering
\caption{Robustness comparison of post hoc watermarking methods for Finance QA with SpeechGPT under signal processing attacks.}
\label{tab:robust_1_finance_qa_speechgpt_posthoc}
\begin{tabular}{l|ccccccc}
\toprule
Watermark & \begin{tabular}[c]{@{}c@{}}No\\ attack\end{tabular} & \begin{tabular}[c]{@{}c@{}}Echo\\ (0.05sec)\end{tabular} & \begin{tabular}[c]{@{}c@{}}Gauss. noise\\ (30dB)\end{tabular} & \begin{tabular}[c]{@{}c@{}}Lowpass\\ (40\%)\end{tabular} & \begin{tabular}[c]{@{}c@{}}Smooth\\ (6 samp.)\end{tabular} & \begin{tabular}[c]{@{}c@{}}Speed\\ (0.9)\end{tabular} & \begin{tabular}[c]{@{}c@{}}Speed\\ (1.1)\end{tabular} \\ \hline
AudioSeal & 1.00 &  0.84 &  0.01 &  1.00 &  1.00 & 0.00 & 0.00 \\
WavMark & 1.00 & 1.00  & 0.99  & 1.00 & 1.00 & 1.00 & 1.00 \\
\midrule
\methodname & 0.94 & 0.94 & 0.94 & 0.94 & 0.93 & 0.38 & 0.24 \\
\bottomrule
\end{tabular}
\end{table}
\begin{table}
\centering
\caption{Robustness comparison of post hoc watermarking methods for Finance QA with SpeechGPT under codec-based, quantizing, and denoising attacks.}
\label{tab:robust_2_finance_qa_speechgpt_posthoc}
\begin{tabular}{l|cccccc}
\toprule
Watermark & \begin{tabular}[c]{@{}c@{}}EnCodec\\ (24kHz)\end{tabular} & \begin{tabular}[c]{@{}c@{}}MP3\\ (32kbps)\end{tabular} & \begin{tabular}[c]{@{}c@{}}MP3\\ (40kbps)\end{tabular} & \begin{tabular}[c]{@{}c@{}}Opus\\ (16kbps)\end{tabular} & \begin{tabular}[c]{@{}c@{}}Opus\\ (31kbps)\end{tabular} & \begin{tabular}[c]{@{}c@{}}Quant.\\ (64-bit)\end{tabular} \\ \hline
AudioSeal & 0.00 &  0.01 &  0.06 & 0.00 & 0.28 & 0.01  \\
WavMark & 0.00 &  1.00 &  1.00 & 0.74 & 1.00 & 0.29 \\
\midrule
\methodname & 0.91 & 0.89 & 0.88 & 0.95 & 0.94 & 0.87  \\
\bottomrule
\end{tabular}
\end{table}

\subsection{Audio quality}

We then showcase that post hoc methods are inevitably harmful to audio quality, since operating on the waveform level significantly deteriorates the FAD score. Notably, audios processed by WavMark have audible high-frequency artifacts, which justifies its higher detectability. We present the audio quality in Tables \ref{tab:dolly_cw_metrics_SpiritLM_posthoc} through \ref{tab:librispeech_metrics_SpiritLM_posthoc}. The mean opinion score (MOS) is also reported using the estimators NISQA and DNSMOSPro, from \citet{mittag2021nisqa} and \citet{cumlin2024dnsmos}.

\begin{table}
  \centering
  \caption{Quality comparison with post hoc methods for Dolly CW with SpiritLM}
  \label{tab:dolly_cw_metrics_SpiritLM_posthoc}
  \begin{tabular}{lccc}
  \toprule
    \multirow{2}{*}{\centering Method} & \multirow{2}{*}{FAD $\downarrow$} & \multicolumn{2}{c}{MOS $\uparrow$} \\
\cmidrule(r){3-4}
   &  & NISQA & DNSMOSPro \\
  \midrule
  No watermark & 0.0487 & 3.915 & 3.713 \\
  \midrule
  AudioSeal & 0.3083 & 3.813 & 3.779 \\
  WavMark & 1.8233 & 4.167 & 4.215 \\
  \midrule
  \methodname & 0.0512 & 3.860 & 3.763 \\
  \bottomrule
  \end{tabular}
\end{table}

\begin{table}
  \centering
  \caption{Quality comparison with post hoc methods for Longform QA with SpiritLM}
  \label{tab:longform_qa_metrics_SpiritLM_posthoc}
  \begin{tabular}{lccc}
  \toprule
    \multirow{2}{*}{\centering Method} & \multirow{2}{*}{FAD $\downarrow$} & \multicolumn{2}{c}{MOS $\uparrow$} \\
\cmidrule(r){3-4}
   &  & NISQA & DNSMOSPro \\
  \midrule
  No watermark & 0.0337 & 3.934 & 3.766 \\
  \midrule
  AudioSeal & 0.3061 & 3.830 & 3.811 \\
  WavMark & 1.7910 & 4.160 & 4.190 \\
  \midrule
  \methodname & 0.0416 & 3.958 & 3.764 \\
  \bottomrule
  \end{tabular}
\end{table}

\begin{table}
  \centering
  \caption{Quality comparison with post hoc methods for Finance QA with SpiritLM}
  \label{tab:finance_qa_metrics_SpiritLM_posthoc}
  \begin{tabular}{lccc}
  \toprule
    \multirow{2}{*}{\centering Method} & \multirow{2}{*}{FAD $\downarrow$} & \multicolumn{2}{c}{MOS $\uparrow$} \\
\cmidrule(r){3-4}
   &  & NISQA & DNSMOSPro \\
  \midrule
   No watermark &  0.0233 & 3.964 & 3.768 \\
  \midrule
   AudioSeal & 0.3081 & 3.901 & 3.777 \\
  WavMark & 1.7228 & 4.212 & 4.158 \\
  \midrule
  \methodname & 0.0515 & 3.948 & 3.766 \\
  \bottomrule
  \end{tabular}
\end{table}

\begin{table}
  \centering
  \caption{Quality comparison with post hoc methods for LibriSpeech with SpiritLM}
  \label{tab:librispeech_metrics_SpiritLM_posthoc}
  \begin{tabular}{lccc}
  \toprule
    \multirow{2}{*}{\centering Method} & \multirow{2}{*}{FAD $\downarrow$} & \multicolumn{2}{c}{MOS $\uparrow$} \\
\cmidrule(r){3-4}
   &  & NISQA & DNSMOSPro \\
  \midrule
    No watermark & 0.0054 & 3.966 & 3.780 \\
  \midrule
    AudioSeal & 0.2918 & 3.915 & 3.798 \\
  WavMark & 1.6509 & 4.243 & 4.203 \\
  \midrule
  \methodname & 0.0065 & 3.959 & 3.798 \\
  \bottomrule
  \end{tabular}
\end{table}

\section{Experimental settings}\label{sec:add_settings}
\subsection{Attack Suite}
\label{sec:attack_settings}

To evaluate the robustness of \methodname\ under realistic channel conditions, we apply a suite of \textbf{thirteen} single-channel audio attacks. Each attack is tuned to a moderate, perceptually acceptable strength.

\begin{itemize}
  \item \textbf{Echo (0.05 s)} -
   Echo with a 50 ms delay.

  \item \textbf{Gaussian noise (30 dB SNR)} -
   Injects additive white Gaussian noise to achieve an output signal-to-noise ratio of 30 dB.
   
  \item \textbf{Low-pass filter (40 \% of Nyquist)} -
   Applies a low-pass filter with a cut-off frequency equal to 40\% of the Nyquist rate.

  \item \textbf{Smoothing (6-sample moving average)} -
   Applies a moving-average filter of width 6 samples.
   
  \item \textbf{Speed perturbation (0.9× and 1.1×)} Interpolates the waveform to speed up or slow down, accordingly.

   \item \textbf{EnCodec (24 kHz)} -
   Re-encodes the audio with Meta’s EnCodec neural codec \citep{defossezhigh} at 24 kHz bandwidth.

  \item \textbf{MP3 recompression (32 kbit/s and 40 kbit/s)} -
   Re-encodes the waveform using at the given constant bit rate.

  \item \textbf{Opus recompression (16 kbit/s and 31 kbit/s)} -
   Re-encodes the waveform at the given constant bit rate.

  \item \textbf{Quantization (64 levels)} -
   Uniformly quantizes samples to 64 discrete amplitude levels.

   \item \textbf{Denoising} - Applies the DCCRN \citep{hu2020dccrn} denoising network to the waveforms perturbed by the Gaussian noise at 30 dB.
\end{itemize}

Each attack is applied independently to the model outputs and no attack stacking is used.

\section{Additional Ablations.}
\label{sec:full_ablation}

\subsection{Number of clusters}
\label{sec:n_clusters}

We conducted an ablation study on the number of clusters for both models using the Dolly CW dataset, with generation settings identical to the main experiments. We resent the results in Tables~\ref{tab:clustering_ablation_1} and \ref{tab:clustering_ablation_2}, which show that detectability initially increases with more clusters but begins to decline beyond a certain point. This trend arises because SpiritLM has approximately 500 audio tokens, and using too many clusters leads to overly fine partitions that fail to effectively mitigate retokenization errors. We find that strikes a good balance and yields optimal detectability, supporting our choice in the main experiments. For the non-clustering baseline, we assign tokens randomly to h=20 clusters and apply \methodname.

\begin{table}
    \caption{Ablation on the optimal number of clusters for SpiritLM.}
    \label{tab:clustering_ablation_1}
    \centering
    \begin{tabular}{ccc}
    \toprule
        h & TPR@FPR=1\% & Median p-value \\
        \midrule
        10 & 0.798343 & 0.000473744 \\
        20 & 0.922018 & 2.27798e-05 \\
        30 & 0.724234 & 0.00165162 \\
        40 & 0.904494 & 6.96696e-05 \\
        80 & 0.696884 & 0.00124669 \\
        100 & 0.858696 & 0.000264207 \\
        120 & 0.676093 & 0.00145675 \\
        150 & 0.687927 & 0.002948 \\
        \bottomrule
    \end{tabular}
\end{table}

\begin{table}
    \caption{Ablation on the optimal number of clusters for SpeechGPT.}
    \label{tab:clustering_ablation_2}
    \centering
    \begin{tabular}{ccc}
       \toprule
        h & TPR@FPR=1\% & Median p-value \\
                \midrule
        10 & 0.666667 & 0.000672034 \\
        20 & 0.816092 & 5.56062e-05 \\
        30 & 0.660377 & 0.00143836 \\
        40 & 0.548023 & 0.00592188 \\
        80 & 0.754144 & 0.000392686 \\
        100 & 0.248538 & 0.0748774 \\
        120 & 0.507553 & 0.0092863 \\
        150 & 0.784703 & 0.000100627 \\
        \bottomrule
    \end{tabular}
\end{table}

\subsection{Dependence on generation length}

The length of audio token sequences used for detection is typically around 500, which corresponds to approximately 10–20 seconds of audio. The duration is affected by the tokenizer and vocoder frame rates. Some variation arises because the duration of each token may differ depending on the model’s duration prediction. In Table~\ref{tab:generation_length}, we present detection results on randomly cropped segments of watermarked audio with varying durations. As shown, detection performance improves with longer audio segments, achieving reliable detectability for audio durations exceeding 5 seconds.

\begin{table}
    \caption{The impact of audio length on detection, for the Longform QA dataset with SpeechGPT.}
    \label{tab:generation_length}
    \centering
    \begin{tabular}{ccc}
    \toprule
        Time (sec) & TPR@FPR=1\% & Median p-value \\
    \midrule
        $\sim$2.0 & 0.5000 & 1.0306e-02 \\
        $\sim$3.0 & 0.6707 & 1.7270e-03 \\
        $\sim$4.0 & 0.8232 & 4.8318e-04 \\
        $\sim$5.0 & 0.8659 & 7.5983e-05 \\
        $\sim$6.0 & 0.9207 & 2.1367e-05 \\
        $\sim$7.0 & 0.9329 & 8.7637e-06 \\
        $\sim$8.0 & 0.9390 & 4.2917e-06 \\
        $\sim$9.0 & 0.9451 & 2.0837e-06 \\
    \bottomrule
    \end{tabular}
\end{table}

\subsection{Robustness to time shift}

Due to the reliance on accurate audio retokenization which assumes frame alignment, the proposed watermark is not inherently robust to misalignment attacks, like speed modifications, as established in Tables~\ref{tab:robust_1_longform_qa_spiritlm} and \ref{tab:robust_2_longform_qa_spiritlm}. However, cropping attacks or time shifts can deceive detection only if they operate on small offsets (smaller than the frame size). We performed an experiment with the Longform QA dataset and the SpiritLM model, whose frame size is 645 samples. We uniformly sampled 8 time shifts, with shifts larger than 322 being equivalent to negative shifts of less than 50\% of the frame length. In Table~\ref{tab:time_shift}, we observe our watermark’s strength decrease and then increase, but still maintaining high detectability at small offsets. An effective practical defense against cropping attacks would be to simply run detection on a few slightly misaligned versions of the subject audio.

\begin{table}
    \caption{The impact of audio length on detection, for the Longform QA dataset with SpeechGPT.}
    \label{tab:time_shift}
    \centering
    \begin{tabular}{ccc}
    \toprule
        offset (samples) & TPR@FPR=1\% & Median p-value \\
    \midrule
        0 & 0.94 & 2.8014e-05 \\
        80 & 0.92 & 1.2973e-05 \\
        160 & 0.76 & 0.0001 \\
        240 & 0.58 & 0.0041 \\
        320 & 0.43 & 0.0191 \\
        400 & 0.37 & 0.0318 \\
        480 & 0.54 & 0.0069 \\
        560 & 0.88 & 0.0001 \\
        640 & 0.94 & 2.6169e-05 \\
        \bottomrule
    \end{tabular}
\end{table}

\subsection{Quality comparison}
We evaluate speaker similarity using WavLM-Base \citep{chen2022wavlm}, and ASR-CER/WER using HuBERT-large fine-tuned model \citep{hsu2021hubert}, with watermarked audios. We use SpiritLM with Librispeech dataset. From the results in Table~\ref{tab:ASR-CER/WER}, we observe that distortion-free watermarking methods, e.g., Aligned-IS, DiPmark, and $\gamma$-reweight, achieve performance comparable to the unwatermarked baseline across all metrics. In contrast, biased and post-hoc watermarking methods noticeably degrade the generation quality.

\begin{table}[]
\centering
\caption{Comparison of watermarking methods on speech quality and recognition metrics.
Higher Speaker Similarity and lower ASR-CER/WER indicate better performance.}
\label{tab:ASR-CER/WER}
\begin{tabular}{lccc}
\toprule
\textbf{Method} & \textbf{Speaker Similarity $\uparrow$} & \textbf{ASR-CER $\downarrow$} & \textbf{ASR-WER $\downarrow$} \\
\midrule
Baseline       & 0.7550 & 0.1039 & 0.1475 \\
Audioseal      & 0.7479 & 0.1086 & 0.1648 \\
Wavmark        & 0.7311 & 0.1143 & 0.1819 \\
Aligned-IS     & 0.7548 & 0.1038 & 0.1472 \\
DiPmark(0.4)   & 0.7543 & 0.1045 & 0.1468 \\
$\gamma$-reweight & 0.7546 & 0.1043 & 0.1477 \\
KGW(1.5)       & 0.7446 & 0.1093 & 0.1617 \\
Unigram(1.5)   & 0.7431 & 0.1102 & 0.1656 \\
\bottomrule
\end{tabular}
\label{tab:speech_results}
\end{table}
\section{Supplementary Experimental results.}
\label{sec:full_results}

\subsection{Detectability}
\label{sec:detectability_results}

We present an additional watermark detectability evaluation on the Finance QA dataset with SpiritLM in Table \ref{tab:SpiritLM_Finance QA}. From Table \ref{tab:SpiritLM_Finance QA} we see that \methodname\ achieved the best detectability comparing with all other unbiased watermarks, at least 10\% improvement on all TPR@FPR metrics. Besides, \methodname\ outperformed the biased watermarking algorithm KGW and Unigram in most cases, and achieved comparable performance with the strongly biased KGW($\delta$=2.0).

\begin{table}
  \caption{Detectability for Finance QA with SpiritLM}
  \label{tab:SpiritLM_Finance QA}
  \centering
  \begin{tabular}{lccc}
    \toprule
    \multirow{2}{*}{\centering Method} & \multicolumn{2}{c}{TPR@FPR} & \multirow{2}{*}{Median $p$-value} \\
    \cmidrule(r){2-3}
& 1\% & 0.1\% & \\
    \midrule
    KGW($\delta$=1.0) & 0.71 & 0.49 & 0.001 \\
    KGW($\delta$=1.5) & 0.94 & 0.79 & 1.1e-05 \\
    KGW($\delta$=2.0) & 0.97 & 0.92 & 3.2e-07 \\
    Unigram($\delta$=1.0) & 0.19 & 0.06 & 0.151 \\
    Unigram($\delta$=1.5) & 0.32 & 0.15 & 0.039 \\
    Unigram($\delta$=2.0) & 0.66 & 0.39 & 0.002 \\
    \midrule
    DiPmark($\alpha$=0.3) & 0.35 & 0.18 & 0.027 \\
    DiPmark($\alpha$=0.4) & 0.48 & 0.23 & 0.012 \\
    $\gamma$-reweight & 0.56 & 0.34 & 0.005 \\
    ITS & 0.85 & 0.76 & 2.0e-04 \\
    \midrule
    \methodname & 0.94 & 0.79 & 2.4e-05 \\
    \bottomrule
  \end{tabular}
\end{table}

\subsection{Robustness}
\label{sec:robustness_results}

We supplement the robustness evaluation with Tables \ref{tab:robust_1_finance_qa_spiritlm} through \ref{tab:robust_2_librispeech_spiritlm} for SpiritLM, and Tables \ref{tab:robust_1_dolly_cw_speechgpt} through \ref{tab:robust_2_longform_qa_speechgpt} for SpeechGPT. \methodname\ consistently exhibits the strongest robustness, outperforming all distortion‑free watermarking baselines in reliably detecting watermarked audio under adversarial conditions.

\section{Broader Impact}

We introduce \methodname, a distortion-free watermarking framework for autoregressive audio generation models, addressing the retokenization mismatch that limits traditional methods. Beyond the specific technical advancement, its broader impact lies in enhancing the security and trustworthiness of AI-generated audio, which is increasingly critical as synthetic media proliferates. Watermarking technologies such as this not only help identify AI-generated content but also have wider applications, including copyright protection~\citep{liu2025dataset,zhang2025leave,chenwatermark}, content authenticity verification, and digital rights management. However, the paper also highlights that robustness remains a key challenge~\citep{an2025defending,liuimage,chenmark}. Watermarks can be degraded or removed through transformations or noise, emphasizing the ongoing need for more resilient and standardized approaches across modalities

\begin{table}
\centering
\caption{Robustness comparison of watermarking methods for Finance QA with SpiritLM under signal processing attacks. We report TPR at 1\% FPR.}
\label{tab:robust_1_finance_qa_spiritlm}
\scalebox{0.95}{
\begin{tabular}{ll|ccccccc}
\toprule
 & Watermark & \begin{tabular}[c]{@{}c@{}}No\\ attack\end{tabular} & \begin{tabular}[c]{@{}c@{}}Echo\\ (0.05sec)\end{tabular} & \begin{tabular}[c]{@{}c@{}}Gauss. noise\\ (30dB)\end{tabular} & \begin{tabular}[c]{@{}c@{}}Lowpass\\ (40\%)\end{tabular} & \begin{tabular}[c]{@{}c@{}}Smooth\\ (6 samp.)\end{tabular} & \begin{tabular}[c]{@{}c@{}}Speed\\ (0.9)\end{tabular} & \begin{tabular}[c]{@{}c@{}}Speed\\ (1.1)\end{tabular} \\ \hline
\multirow{6}{*}{Dist.}
 & KGW($\delta$=1.0) & 0.71 & 0.70 & 0.79 & 0.74 & 0.74 & 0.12 & 0.14 \\
 & KGW($\delta$=1.5) & 0.94 & 0.94 & 0.95 & 0.94 & 0.94 & 0.27 & 0.25 \\
 & KGW($\delta$=2.0) & 0.97 & 0.98 & 0.98 & 0.97 & 0.98 & 0.37 & 0.36 \\
 & Uni($\delta$=1.0) & 0.19 & 0.16 & 0.04 & 0.20 & 0.13 & 0.03 & 0.01 \\
 & Uni($\delta$=1.5) & 0.32 & 0.32 & 0.14 & 0.30 & 0.22 & 0.06 & 0.05 \\
 & Uni($\delta$=2.0) & 0.66 & 0.62 & 0.39 & 0.66 & 0.57 & 0.17 & 0.12 \\
\midrule
\multirow{5}{*}{\begin{tabular}[c]{@{}l@{}}Dist. \\free \end{tabular}}
 & $\gamma$-reweight & 0.56 & 0.52 & 0.38 & 0.53 & 0.58 & 0.07 & 0.07 \\
 & DiP($\alpha$=0.3) &0.35 & 0.38 & 0.26 & 0.40 & 0.38 & 0.06 & 0.05 \\
 & DiP($\alpha$=0.4) & 0.48 & 0.45 & 0.34 & 0.47 & 0.43 & 0.10 & 0.08 \\
\cmidrule{2-9}
 & \methodname & 0.94 & 0.93 & 0.90 & 0.96 & 0.93 & 0.39 & 0.30 \\
\bottomrule
\end{tabular}
}
\end{table}
\begin{table}
\centering
\caption{Robustness comparison of watermarking methods for Finance QA with SpiritLM under codec-based, quantizing, and denoising attacks. We report TPR at 1\% FPR.}
\label{tab:robust_2_finance_qa_spiritlm}
\scalebox{0.93}{
\begin{tabular}{ll|ccccccc}
\toprule
 & Watermark & \begin{tabular}[c]{@{}c@{}}EnCodec\\ (24kHz)\end{tabular} & \begin{tabular}[c]{@{}c@{}}MP3\\ (32kbps)\end{tabular} & \begin{tabular}[c]{@{}c@{}}MP3\\ (40kbps)\end{tabular} & \begin{tabular}[c]{@{}c@{}}Opus\\ (16kbps)\end{tabular} & \begin{tabular}[c]{@{}c@{}}Opus\\ (31kbps)\end{tabular} & \begin{tabular}[c]{@{}c@{}}Quant.\\ (64-bit)\end{tabular} & \begin{tabular}[c]{@{}c@{}}Denoise\end{tabular} \\ \hline
\multirow{6}{*}{Dist.}
 & KGW($\delta$=1.0) & 0.69 & 0.54 & 0.51 & 0.69 & 0.73 & 0.64 & 0.67 \\
 & KGW($\delta$=1.5) & 0.93 & 0.79 & 0.81 & 0.88 & 0.94 & 0.90 & 0.91 \\
 & KGW($\delta$=2.0) & 0.99 & 0.91 & 0.92 & 0.94 & 0.98 & 0.95 & 0.97 \\
 & Uni($\delta$=1.0) & 0.10 & 0.14 & 0.14 & 0.12 & 0.17 & 0.01 & 0.08 \\
 & Uni($\delta$=1.5) & 0.20 & 0.30 & 0.30 & 0.24 & 0.33 & 0.05 & 0.19 \\
 & Uni($\delta$=2.0) & 0.52 & 0.64 & 0.61 & 0.54 & 0.63 & 0.18 & 0.48 \\
\midrule
\multirow{5}{*}{\begin{tabular}[c]{@{}l@{}}Dist. \\free \end{tabular}}
 & $\gamma$-reweight & 0.48 & 0.36 & 0.38 & 0.47 & 0.55 & 0.56 & 0.53 \\
 & DiP($\alpha$=0.3) & 0.32 & 0.27 & 0.27 & 0.34 & 0.37 & 0.41 & 0.36 \\
 & DiP($\alpha$=0.4) & 0.43 & 0.35 & 0.38 & 0.50 & 0.42 & 0.54 & 0.49 \\
\cmidrule{2-9}
 & \methodname & 0.91 & 0.81 & 0.80 & 0.92 & 0.93 & 0.85 & 0.89 \\
\bottomrule
\end{tabular}
}
\end{table}

\begin{table}
\centering
\caption{Robustness comparison of watermarking methods for LibriSpeech with SpiritLM under signal processing attacks. We report TPR at 1\% FPR.}
\label{tab:robust_1_librispeech_spiritlm}
\scalebox{0.95}{
\begin{tabular}{ll|ccccccc}
\toprule
 & Watermark & \begin{tabular}[c]{@{}c@{}}No\\ attack\end{tabular} & \begin{tabular}[c]{@{}c@{}}Echo\\ (0.05sec)\end{tabular} & \begin{tabular}[c]{@{}c@{}}Gauss. noise\\ (30dB)\end{tabular} & \begin{tabular}[c]{@{}c@{}}Lowpass\\ (40\%)\end{tabular} & \begin{tabular}[c]{@{}c@{}}Smooth\\ (6 samp.)\end{tabular} & \begin{tabular}[c]{@{}c@{}}Speed\\ (0.9)\end{tabular} & \begin{tabular}[c]{@{}c@{}}Speed\\ (1.1)\end{tabular} \\ \hline
\multirow{6}{*}{Dist.}
 & KGW($\delta$=1.0) & 0.69 & 0.52 & 0.64 & 0.55 & 0.58 & 0.05 & 0.09 \\
 & KGW($\delta$=1.5) & 0.97 & 0.86 & 0.92 & 0.90 & 0.91 & 0.17 & 0.18 \\
 & KGW($\delta$=2.0) & 0.99 & 0.97 & 0.97 & 0.97 & 0.97 & 0.24 & 0.26 \\
 & Uni($\delta$=1.0) & 0.06 & 0.05 & 0.01 & 0.05 & 0.04 & 0.01 & 0.00 \\
 & Uni($\delta$=1.5) & 0.25 & 0.18 & 0.07 & 0.17 & 0.14 & 0.02 & 0.01 \\
 & Uni($\delta$=2.0) & 0.48 & 0.39 & 0.18 & 0.38 & 0.32 & 0.04 & 0.03 \\
\midrule
\multirow{5}{*}{\begin{tabular}[c]{@{}l@{}}Dist. \\free \end{tabular}}
 & $\gamma$-reweight & 0.71 & 0.66 & 0.57 & 0.66 & 0.71 & 0.21 & 0.13 \\
 & DiP($\alpha$=0.3) & 0.60 & 0.57 & 0.40 & 0.56 & 0.61 & 0.17 & 0.13 \\
 & DiP($\alpha$=0.4) & 0.69 & 0.66 & 0.54 & 0.69 & 0.71 & 0.18 & 0.15 \\
\cmidrule{2-9}
 & \methodname & 0.95 & 0.92 & 0.90 & 0.93 & 0.92 & 0.28 & 0.21 \\
\bottomrule
\end{tabular}
}
\end{table}
\begin{table}
\centering
\caption{Robustness comparison of watermarking methods for LibriSpeech with SpiritLM under codec-based, quantizing, and denoising attacks. We report TPR at 1\% FPR.}
\label{tab:robust_2_librispeech_spiritlm}
\scalebox{0.93}{
\begin{tabular}{ll|ccccccc}
\toprule
 & Watermark & \begin{tabular}[c]{@{}c@{}}EnCodec\\ (24kHz)\end{tabular} & \begin{tabular}[c]{@{}c@{}}MP3\\ (32kbps)\end{tabular} & \begin{tabular}[c]{@{}c@{}}MP3\\ (40kbps)\end{tabular} & \begin{tabular}[c]{@{}c@{}}Opus\\ (16kbps)\end{tabular} & \begin{tabular}[c]{@{}c@{}}Opus\\ (31kbps)\end{tabular} & \begin{tabular}[c]{@{}c@{}}Quant.\\ (64-bit)\end{tabular} & \begin{tabular}[c]{@{}c@{}}Denoise\end{tabular} \\ \hline
\multirow{6}{*}{Dist.}
 & KGW($\delta$=1.0) & 0.45 & 0.29 & 0.31 & 0.51 & 0.53 & 0.47 & 0.53 \\
 & KGW($\delta$=1.5) & 0.84 & 0.64 & 0.61 & 0.82 & 0.90 & 0.78 & 0.85 \\
 & KGW($\delta$=2.0) & 0.95 & 0.84 & 0.83 & 0.95 & 0.97 & 0.93 & 0.96 \\
 & Uni($\delta$=1.0) & 0.03 & 0.05 & 0.04 & 0.03 & 0.04 & 0.00 & 0.02 \\
 & Uni($\delta$=1.5) & 0.11 & 0.15 & 0.14 & 0.16 & 0.17 & 0.02 & 0.11 \\
 & Uni($\delta$=2.0) & 0.25 & 0.27 & 0.29 & 0.27 & 0.39 & 0.06 & 0.24 \\
\midrule
\multirow{5}{*}{\begin{tabular}[c]{@{}l@{}}Dist. \\free \end{tabular}}
 & $\gamma$-reweight & 0.61 & 0.53 & 0.57 & 0.65 & 0.69 & 0.65 & 0.67 \\
 & DiP($\alpha$=0.3) & 0.50 & 0.41 & 0.45 & 0.53 & 0.55 & 0.54 & 0.52 \\
 & DiP($\alpha$=0.4) & 0.59 & 0.52 & 0.57 & 0.66 & 0.70 & 0.64 & 0.66 \\
\cmidrule{2-9}
 & \methodname & 0.90 & 0.75 & 0.80 & 0.89 & 0.91 & 0.85 & 0.87 \\
\bottomrule
\end{tabular}
}
\end{table}

\begin{table}
\centering
\caption{Robustness comparison of watermarking methods for Dolly CW with SpeechGPT under signal processing attacks. We report TPR at 1\% FPR.}
\label{tab:robust_1_dolly_cw_speechgpt}
\scalebox{0.95}{
\begin{tabular}{ll|ccccccc}
\toprule
 & Watermark & \begin{tabular}[c]{@{}c@{}}No\\ attack\end{tabular} & \begin{tabular}[c]{@{}c@{}}Echo\\ (0.05sec)\end{tabular} & \begin{tabular}[c]{@{}c@{}}Gauss. noise\\ (30dB)\end{tabular} & \begin{tabular}[c]{@{}c@{}}Lowpass\\ (40\%)\end{tabular} & \begin{tabular}[c]{@{}c@{}}Smooth\\ (6 samp.)\end{tabular} & \begin{tabular}[c]{@{}c@{}}Speed\\ (0.9)\end{tabular} & \begin{tabular}[c]{@{}c@{}}Speed\\ (1.1)\end{tabular} \\ \hline
\multirow{6}{*}{Dist.}
 & KGW($\delta$=1.0) & 0.32  & 0.28 & 0.28 & 0.31 & 0.29 & 0.03 & 0.03 \\
 & KGW($\delta$=1.5) & 0.56  & 0.49 & 0.48 & 0.57 & 0.58 & 0.06 & 0.04 \\
 & KGW($\delta$=2.0) & 0.73 & 0.66 & 0.67 & 0.73 & 0.73 & 0.10 & 0.05 \\
 & Uni($\delta$=1.0) & 0.17 & 0.19 & 0.16 & 0.17 & 0.18 & 0.12 & 0.09 \\
 & Uni($\delta$=1.5) & 0.40  & 0.44 & 0.32 & 0.40 & 0.43 & 0.29 & 0.17 \\
 & Uni($\delta$=2.0) & 0.57 & 0.57 & 0.55 & 0.57 & 0.58 & 0.39 & 0.24 \\
\midrule
\multirow{5}{*}{\begin{tabular}[c]{@{}l@{}}Dist. \\free \end{tabular}}
 & $\gamma$-reweight & 0.37  & 0.21 & 0.32 & 0.34 & 0.31 & 0.02 & 0.02 \\
 & DiP($\alpha$=0.3) & 0.23 & 0.15 & 0.19 & 0.23 & 0.21 & 0.02 & 0.02 \\
 & DiP($\alpha$=0.4) & 0.29  & 0.19 & 0.30 & 0.28 & 0.29 & 0.01 & 0.00 \\
\cmidrule{2-9}
 & \methodname & 0.82 & 0.78 & 0.81 & 0.81 & 0.80 & 0.23 & 0.15 \\
\bottomrule
\end{tabular}
}
\end{table}
\begin{table}
\centering
\caption{Robustness comparison of watermarking methods for Dolly CW with SpeechGPT under codec-based and quantizing attacks. We report TPR at 1\% FPR.}
\label{tab:robust_2_dolly_cw_speechgpt}
\scalebox{0.93}{
\begin{tabular}{ll|cccccc}
\toprule
 & Watermark & \begin{tabular}[c]{@{}c@{}}EnCodec\\ (24kHz)\end{tabular} & \begin{tabular}[c]{@{}c@{}}MP3\\ (32kbps)\end{tabular} & \begin{tabular}[c]{@{}c@{}}MP3\\ (40kbps)\end{tabular} & \begin{tabular}[c]{@{}c@{}}Opus\\ (16kbps)\end{tabular} & \begin{tabular}[c]{@{}c@{}}Opus\\ (31kbps)\end{tabular} & \begin{tabular}[c]{@{}c@{}}Quant.\\ (64-bit)\end{tabular} \\
 \hline
\multirow{6}{*}{Dist.}
 & KGW($\delta$=1.0) & 0.26 & 0.16 & 0.16 & 0.29 & 0.30 & 0.14  \\
 & KGW($\delta$=1.5) & 0.52 & 0.26 & 0.26 & 0.56 & 0.56 & 0.36  \\
 & KGW($\delta$=2.0) & 0.67 & 0.45 & 0.43 & 0.69 & 0.71 & 0.53  \\
 & Uni($\delta$=1.0) & 0.17 & 0.15 & 0.15 & 0.19 & 0.18 & 0.06  \\
 & Uni($\delta$=1.5) & 0.35 & 0.28 & 0.29 & 0.41 & 0.40 & 0.12  \\
 & Uni($\delta$=2.0) & 0.55 & 0.49 & 0.46 & 0.61 & 0.60 & 0.23  \\
\midrule
\multirow{5}{*}{\begin{tabular}[c]{@{}l@{}}Dist. \\free \end{tabular}}
 & $\gamma$-reweight & 0.32 & 0.16 & 0.18 & 0.31 & 0.35 & 0.23  \\
 & DiP($\alpha$=0.3) & 0.22 & 0.10 & 0.10 & 0.19 & 0.24 & 0.12\\
 & DiP($\alpha$=0.4) & 0.27 & 0.12 & 0.14 & 0.24 & 0.30 & 0.20  \\
\cmidrule{2-8}
 & \methodname & 0.78 & 0.64 & 0.60 & 0.78 & 0.82 & 0.62  \\
\bottomrule
\end{tabular}
}
\end{table}

\begin{table}
\centering
\caption{Robustness comparison of watermarking methods for Finance QA with SpeechGPT under signal processing attacks. We report TPR at 1\% FPR.}
\label{tab:robust_1_finance_qa_speechgpt}
\scalebox{0.95}{
\begin{tabular}{ll|ccccccc}
\toprule
 & Watermark & \begin{tabular}[c]{@{}c@{}}No\\ attack\end{tabular} & \begin{tabular}[c]{@{}c@{}}Echo\\ (0.05sec)\end{tabular} & \begin{tabular}[c]{@{}c@{}}Gauss. noise\\ (30dB)\end{tabular} & \begin{tabular}[c]{@{}c@{}}Lowpass\\ (40\%)\end{tabular} & \begin{tabular}[c]{@{}c@{}}Smooth\\ (6 samp.)\end{tabular} & \begin{tabular}[c]{@{}c@{}}Speed\\ (0.9)\end{tabular} & \begin{tabular}[c]{@{}c@{}}Speed\\ (1.1)\end{tabular} \\ \hline
\multirow{6}{*}{Dist.}
 & KGW($\delta$=1.0) & 0.36 & 0.30 & 0.34 & 0.37 & 0.35 & 0.05 & 0.03 \\
 & KGW($\delta$=1.5) & 0.68 & 0.65 & 0.69 & 0.70 & 0.71 & 0.07 & 0.07 \\
 & KGW($\delta$=2.0) & 0.80 & 0.78 & 0.78 & 0.81 & 0.83 & 0.10 & 0.07 \\
 & Uni($\delta$=1.0) & 0.21  & 0.27 & 0.14 & 0.19 & 0.21 & 0.20 & 0.15 \\
 & Uni($\delta$=1.5) & 0.64  & 0.70 & 0.56 & 0.60 & 0.64 & 0.40 & 0.31 \\
 & Uni($\delta$=2.0) & 0.72 & 0.72 & 0.71 & 0.74 & 0.76 & 0.52 & 0.39 \\
\midrule
\multirow{5}{*}{\begin{tabular}[c]{@{}l@{}}Dist. \\free \end{tabular}}
 & $\gamma$-reweight &0.56 & 0.37 & 0.51 & 0.52 & 0.55 & 0.03 & 0.04 \\
 & DiP($\alpha$=0.3) & 0.33  & 0.26 & 0.39 & 0.39 & 0.33 & 0.01 & 0.00 \\
 & DiP($\alpha$=0.4) & 0.54  & 0.36 & 0.49 & 0.51 & 0.48 & 0.02 & 0.03 \\
\cmidrule{2-9}
 & \methodname & 0.94  & 0.94 & 0.94 & 0.94 & 0.93 & 0.38 & 0.24 \\
\bottomrule
\end{tabular}
}
\end{table}
\begin{table}
\centering
\caption{Robustness comparison of watermarking methods for Finance QA with SpeechGPT under codec-based and quantizing attacks. We report TPR at 1\% FPR.}
\label{tab:robust_2_finance_qa_speechgpt}
\scalebox{0.93}{
\begin{tabular}{ll|cccccc}
\toprule
 & Watermark & \begin{tabular}[c]{@{}c@{}}EnCodec\\ (24kHz)\end{tabular} & \begin{tabular}[c]{@{}c@{}}MP3\\ (32kbps)\end{tabular} & \begin{tabular}[c]{@{}c@{}}MP3\\ (40kbps)\end{tabular} & \begin{tabular}[c]{@{}c@{}}Opus\\ (16kbps)\end{tabular} & \begin{tabular}[c]{@{}c@{}}Opus\\ (31kbps)\end{tabular} & \begin{tabular}[c]{@{}c@{}}Quant.\\ (64-bit)\end{tabular} \\ \hline
\multirow{6}{*}{Dist.}
 & KGW($\delta$=1.0) & 0.36 & 0.17 & 0.18 & 0.35 & 0.39 & 0.16  \\
 & KGW($\delta$=1.5) & 0.69 & 0.38 & 0.40 & 0.71 & 0.69 & 0.50 \\
 & KGW($\delta$=2.0) & 0.75 & 0.58 & 0.61 & 0.80 & 0.82 & 0.66  \\
 & Uni($\delta$=1.0) & 0.21 & 0.16 & 0.15 & 0.23 & 0.25 & 0.05 \\
 & Uni($\delta$=1.5) & 0.61 & 0.50 & 0.51 & 0.71 & 0.66 & 0.22  \\
 & Uni($\delta$=2.0) & 0.71 & 0.63 & 0.65 & 0.77 & 0.75 & 0.44  \\
\midrule
\multirow{5}{*}{\begin{tabular}[c]{@{}l@{}}Dist. \\free \end{tabular}}
 & $\gamma$-reweight & 0.52 & 0.32 & 0.30 & 0.47 & 0.56 & 0.37  \\
 & DiP($\alpha$=0.3) & 0.38 & 0.19 & 0.19 & 0.31 & 0.36 & 0.23  \\
 & DiP($\alpha$=0.4) & 0.45 & 0.28 & 0.26 & 0.47 & 0.49 & 0.31 \\
\cmidrule{2-8}
 & \methodname & 0.91 & 0.89 & 0.88 & 0.95 & 0.94 & 0.87  \\
\bottomrule
\end{tabular}
}
\end{table}

\begin{table}
\centering
\caption{Robustness comparison of watermarking methods for Longform QA with SpeechGPT under signal processing attacks. We report TPR at 1\% FPR.}
\label{tab:robust_1_longform_qa_speechgpt}
\scalebox{0.95}{
\begin{tabular}{ll|ccccccc}
\toprule
 & Watermark & \begin{tabular}[c]{@{}c@{}}No\\ attack\end{tabular} & \begin{tabular}[c]{@{}c@{}}Echo\\ (0.05sec)\end{tabular} & \begin{tabular}[c]{@{}c@{}}Gauss. noise\\ (30dB)\end{tabular} & \begin{tabular}[c]{@{}c@{}}Lowpass\\ (40\%)\end{tabular} & \begin{tabular}[c]{@{}c@{}}Smooth\\ (6 samp.)\end{tabular} & \begin{tabular}[c]{@{}c@{}}Speed\\ (0.9)\end{tabular} & \begin{tabular}[c]{@{}c@{}}Speed\\ (1.1)\end{tabular} \\ \hline
\multirow{6}{*}{Dist.}
 & KGW($\delta$=1.0) & 0.40 & 0.38 & 0.38 & 0.44 & 0.38 & 0.06 & 0.01 \\
 & KGW($\delta$=1.5) & 0.72 & 0.57 & 0.59 & 0.75 & 0.63 & 0.08 & 0.04 \\
 & KGW($\delta$=2.0) & 0.84  & 0.83 & 0.81 & 0.83 & 0.83 & 0.10 & 0.08 \\
 & Uni($\delta$=1.0) &0.25  & 0.30 & 0.22 & 0.23 & 0.29 & 0.15 & 0.10 \\
 & Uni($\delta$=1.5) & 0.55 & 0.61 & 0.48 & 0.54 & 0.60 & 0.38 & 0.27 \\
 & Uni($\delta$=2.0) & 0.71 & 0.75 & 0.67 & 0.72 & 0.71 & 0.53 & 0.32 \\
\midrule
\multirow{5}{*}{\begin{tabular}[c]{@{}l@{}}Dist. \\free \end{tabular}}
 & $\gamma$-reweight & 0.46 & 0.26 & 0.39 & 0.48 & 0.41 & 0.04 & 0.02 \\
 & DiP($\alpha$=0.3) & 0.37 & 0.25 & 0.33 & 0.32 & 0.36 & 0.04 & 0.03 \\
 & DiP($\alpha$=0.4) & 0.46 & 0.33 & 0.40 & 0.46 & 0.37 & 0.01 & 0.01 \\
\cmidrule{2-9}
 & \methodname & 0.95 & 0.93 & 0.93 & 0.95 & 0.92 & 0.38 & 0.22 \\
\bottomrule
\end{tabular}
}
\end{table}
\begin{table}
\centering
\caption{Robustness comparison of watermarking methods for Longform QA with SpeechGPT under codec-based and quantizing attacks. We report TPR at 1\% FPR.}
\label{tab:robust_2_longform_qa_speechgpt}
\scalebox{0.93}{
\begin{tabular}{ll|cccccc}
\toprule
 & Watermark & \begin{tabular}[c]{@{}c@{}}EnCodec\\ (24kHz)\end{tabular} & \begin{tabular}[c]{@{}c@{}}MP3\\ (32kbps)\end{tabular} & \begin{tabular}[c]{@{}c@{}}MP3\\ (40kbps)\end{tabular} & \begin{tabular}[c]{@{}c@{}}Opus\\ (16kbps)\end{tabular} & \begin{tabular}[c]{@{}c@{}}Opus\\ (31kbps)\end{tabular} & \begin{tabular}[c]{@{}c@{}}Quant.\\ (64-bit)\end{tabular}  \\ \hline
\multirow{6}{*}{Dist.}
 & KGW($\delta$=1.0) & 0.40 & 0.21 & 0.17 & 0.37 & 0.42 & 0.19  \\
 & KGW($\delta$=1.5) & 0.63 & 0.38 & 0.42 & 0.63 & 0.70 & 0.41  \\
 & KGW($\delta$=2.0) & 0.80 & 0.59 & 0.61 & 0.84 & 0.84 & 0.64  \\
 & Uni($\delta$=0.5) & 0.07 & 0.05 & 0.03 & 0.08 & 0.08 & 0.04 \\
 & Uni($\delta$=1.0) & 0.21 & 0.20 & 0.21 & 0.29 & 0.24 & 0.05  \\
 & Uni($\delta$=1.5) & 0.54 & 0.49 & 0.46 & 0.60 & 0.58 & 0.14  \\
 & Uni($\delta$=2.0) & 0.69 & 0.61 & 0.64 & 0.77 & 0.73 & 0.35  \\
\midrule
\multirow{5}{*}{\begin{tabular}[c]{@{}l@{}}Dist. \\free \end{tabular}}
 & $\gamma$-reweight & 0.40 & 0.26 & 0.25 & 0.38 & 0.48 & 0.30  \\
 & DiP($\alpha$=0.3) & 0.34 & 0.20 & 0.16 & 0.26 & 0.37 & 0.17  \\
 & DiP($\alpha$=0.4) & 0.43 & 0.19 & 0.16 & 0.37 & 0.37 & 0.27  \\
\cmidrule{2-8}
 & \methodname & 0.91 & 0.82 & 0.84 & 0.95 & 0.96 & 0.85 \\
\bottomrule
\end{tabular}
}
\end{table}

\newpage
\clearpage
\section*{NeurIPS Paper Checklist}

\begin{enumerate}

\item {\bf Claims}
    \item[] Question: Do the main claims made in the abstract and introduction accurately reflect the paper's contributions and scope?
    \item[] Answer: \answerYes{}
    \item[] Justification: the main claims made in the abstract and introduction accurately reflect our paper's contributions and scope.
    \item[] Guidelines:
    \begin{itemize}
   \item The answer NA means that the abstract and introduction do not include the claims made in the paper.
   \item The abstract and/or introduction should clearly state the claims made, including the contributions made in the paper and important assumptions and limitations. A No or NA answer to this question will not be perceived well by the reviewers.
   \item The claims made should match theoretical and experimental results, and reflect how much the results can be expected to generalize to other settings.
   \item It is fine to include aspirational goals as motivation as long as it is clear that these goals are not attained by the paper.
    \end{itemize}

\item {\bf Limitations}
    \item[] Question: Does the paper discuss the limitations of the work performed by the authors?
    \item[] Answer: \answerYes{}
    \item[] Justification: see Section~\ref{sec:limitation}.
    \item[] Guidelines:
    \begin{itemize}
   \item The answer NA means that the paper has no limitation while the answer No means that the paper has limitations, but those are not discussed in the paper.
   \item The authors are encouraged to create a separate "Limitations" section in their paper.
   \item The paper should point out any strong assumptions and how robust the results are to violations of these assumptions (e.g., independence assumptions, noiseless settings, model well-specification, asymptotic approximations only holding locally). The authors should reflect on how these assumptions might be violated in practice and what the implications would be.
   \item The authors should reflect on the scope of the claims made, e.g., if the approach was only tested on a few datasets or with a few runs. In general, empirical results often depend on implicit assumptions, which should be articulated.
   \item The authors should reflect on the factors that influence the performance of the approach. For example, a facial recognition algorithm may perform poorly when image resolution is low or images are taken in low lighting. Or a speech-to-text system might not be used reliably to provide closed captions for online lectures because it fails to handle technical jargon.
   \item The authors should discuss the computational efficiency of the proposed algorithms and how they scale with dataset size.
   \item If applicable, the authors should discuss possible limitations of their approach to address problems of privacy and fairness.
   \item While the authors might fear that complete honesty about limitations might be used by reviewers as grounds for rejection, a worse outcome might be that reviewers discover limitations that aren't acknowledged in the paper. The authors should use their best judgment and recognize that individual actions in favor of transparency play an important role in developing norms that preserve the integrity of the community. Reviewers will be specifically instructed to not penalize honesty concerning limitations.
    \end{itemize}

\item {\bf Theory assumptions and proofs}
    \item[] Question: For each theoretical result, does the paper provide the full set of assumptions and a complete (and correct) proof?
    \item[] Answer: \answerYes{}
    \item[] Justification: We provide the full set of assumptions and a complete (and correct) proof.
    \item[] Guidelines:
    \begin{itemize}
   \item The answer NA means that the paper does not include theoretical results.
   \item All the theorems, formulas, and proofs in the paper should be numbered and cross-referenced.
   \item All assumptions should be clearly stated or referenced in the statement of any theorems.
   \item The proofs can either appear in the main paper or the supplemental material, but if they appear in the supplemental material, the authors are encouraged to provide a short proof sketch to provide intuition.
   \item Inversely, any informal proof provided in the core of the paper should be complemented by formal proofs provided in appendix or supplemental material.
   \item Theorems and Lemmas that the proof relies upon should be properly referenced.
    \end{itemize}

    \item {\bf Experimental result reproducibility}
    \item[] Question: Does the paper fully disclose all the information needed to reproduce the main experimental results of the paper to the extent that it affects the main claims and/or conclusions of the paper (regardless of whether the code and data are provided or not)?
    \item[] Answer: \answerYes{}
    \item[] Justification: We disclose all the information needed to reproduce the main experimental results of the paper in the experimental section and appendix.
    \item[] Guidelines:
    \begin{itemize}
   \item The answer NA means that the paper does not include experiments.
   \item If the paper includes experiments, a No answer to this question will not be perceived well by the reviewers: Making the paper reproducible is important, regardless of whether the code and data are provided or not.
   \item If the contribution is a dataset and/or model, the authors should describe the steps taken to make their results reproducible or verifiable.
   \item Depending on the contribution, reproducibility can be accomplished in various ways. For example, if the contribution is a novel architecture, describing the architecture fully might suffice, or if the contribution is a specific model and empirical evaluation, it may be necessary to either make it possible for others to replicate the model with the same dataset, or provide access to the model. In general. releasing code and data is often one good way to accomplish this, but reproducibility can also be provided via detailed instructions for how to replicate the results, access to a hosted model (e.g., in the case of a large language model), releasing of a model checkpoint, or other means that are appropriate to the research performed.
   \item While NeurIPS does not require releasing code, the conference does require all submissions to provide some reasonable avenue for reproducibility, which may depend on the nature of the contribution. For example
   \begin{enumerate}
  \item If the contribution is primarily a new algorithm, the paper should make it clear how to reproduce that algorithm.
  \item If the contribution is primarily a new model architecture, the paper should describe the architecture clearly and fully.
  \item If the contribution is a new model (e.g., a large language model), then there should either be a way to access this model for reproducing the results or a way to reproduce the model (e.g., with an open-source dataset or instructions for how to construct the dataset).
  \item We recognize that reproducibility may be tricky in some cases, in which case authors are welcome to describe the particular way they provide for reproducibility. In the case of closed-source models, it may be that access to the model is limited in some way (e.g., to registered users), but it should be possible for other researchers to have some path to reproducing or verifying the results.
   \end{enumerate}
    \end{itemize}

\item {\bf Open access to data and code}
    \item[] Question: Does the paper provide open access to the data and code, with sufficient instructions to faithfully reproduce the main experimental results, as described in supplemental material?
    \item[] Answer: \answerYes{}
    \item[] Justification: We provide code in the supplementary material.
    \item[] Guidelines:
    \begin{itemize}
   \item The answer NA means that paper does not include experiments requiring code.
   \item Please see the NeurIPS code and data submission guidelines (\url{https://nips.cc/public/guides/CodeSubmissionPolicy}) for more details.
   \item While we encourage the release of code and data, we understand that this might not be possible, so “No” is an acceptable answer. Papers cannot be rejected simply for not including code, unless this is central to the contribution (e.g., for a new open-source benchmark).
   \item The instructions should contain the exact command and environment needed to run to reproduce the results. See the NeurIPS code and data submission guidelines (\url{https://nips.cc/public/guides/CodeSubmissionPolicy}) for more details.
   \item The authors should provide instructions on data access and preparation, including how to access the raw data, preprocessed data, intermediate data, and generated data, etc.
   \item The authors should provide scripts to reproduce all experimental results for the new proposed method and baselines. If only a subset of experiments are reproducible, they should state which ones are omitted from the script and why.
   \item At submission time, to preserve anonymity, the authors should release anonymized versions (if applicable).
   \item Providing as much information as possible in supplemental material (appended to the paper) is recommended, but including URLs to data and code is permitted.
    \end{itemize}

\item {\bf Experimental setting/details}
    \item[] Question: Does the paper specify all the training and test details (e.g., data splits, hyperparameters, how they were chosen, type of optimizer, etc.) necessary to understand the results?
    \item[] Answer: \answerYes{}
    \item[] Justification: We disclose all the information needed to reproduce the main experimental results of the paper in the experimental section and appendix.
    \item[] Guidelines:
    \begin{itemize}
   \item The answer NA means that the paper does not include experiments.
   \item The experimental setting should be presented in the core of the paper to a level of detail that is necessary to appreciate the results and make sense of them.
   \item The full details can be provided either with the code, in appendix, or as supplemental material.
    \end{itemize}

\item {\bf Experiment statistical significance}
    \item[] Question: Does the paper report error bars suitably and correctly defined or other appropriate information about the statistical significance of the experiments?
    \item[] Answer: \answerYes{}
    \item[] Justification: We provide the watermarking detectability results using hypothesis test and report the false positiva rate under rigorous statistical guarantee.
    \item[] Guidelines:
    \begin{itemize}
   \item The answer NA means that the paper does not include experiments.
   \item The authors should answer "Yes" if the results are accompanied by error bars, confidence intervals, or statistical significance tests, at least for the experiments that support the main claims of the paper.
   \item The factors of variability that the error bars are capturing should be clearly stated (for example, train/test split, initialization, random drawing of some parameter, or overall run with given experimental conditions).
   \item The method for calculating the error bars should be explained (closed form formula, call to a library function, bootstrap, etc.)
   \item The assumptions made should be given (e.g., Normally distributed errors).
   \item It should be clear whether the error bar is the standard deviation or the standard error of the mean.
   \item It is OK to report 1-sigma error bars, but one should state it. The authors should preferably report a 2-sigma error bar than state that they have a 96\% CI, if the hypothesis of Normality of errors is not verified.
   \item For asymmetric distributions, the authors should be careful not to show in tables or figures symmetric error bars that would yield results that are out of range (e.g. negative error rates).
   \item If error bars are reported in tables or plots, The authors should explain in the text how they were calculated and reference the corresponding figures or tables in the text.
    \end{itemize}

\item {\bf Experiments compute resources}
    \item[] Question: For each experiment, does the paper provide sufficient information on the computer resources (type of compute workers, memory, time of execution) needed to reproduce the experiments?
    \item[] Answer: \answerYes{}
    \item[] Justification: See our experimental settings.
    \item[] Guidelines:
    \begin{itemize}
   \item The answer NA means that the paper does not include experiments.
   \item The paper should indicate the type of compute workers CPU or GPU, internal cluster, or cloud provider, including relevant memory and storage.
   \item The paper should provide the amount of compute required for each of the individual experimental runs as well as estimate the total compute.
   \item The paper should disclose whether the full research project required more compute than the experiments reported in the paper (e.g., preliminary or failed experiments that didn't make it into the paper).
    \end{itemize}
    
\item {\bf Code of ethics}
    \item[] Question: Does the research conducted in the paper conform, in every respect, with the NeurIPS Code of Ethics \url{https://neurips.cc/public/EthicsGuidelines}?
    \item[] Answer: \answerYes{}
    \item[] Justification: The research conducted in the paper conform, in every respect, with the NeurIPS Code of Ethics.
    \item[] Guidelines:
    \begin{itemize}
   \item The answer NA means that the authors have not reviewed the NeurIPS Code of Ethics.
   \item If the authors answer No, they should explain the special circumstances that require a deviation from the Code of Ethics.
   \item The authors should make sure to preserve anonymity (e.g., if there is a special consideration due to laws or regulations in their jurisdiction).
    \end{itemize}

\item {\bf Broader impacts}
    \item[] Question: Does the paper discuss both potential positive societal impacts and negative societal impacts of the work performed?
    \item[] Answer: \answerYes{}
    \item[] Justification: See the introduction section and the broader impact section in appendix.
    \item[] Guidelines:
    \begin{itemize}
   \item The answer NA means that there is no societal impact of the work performed.
   \item If the authors answer NA or No, they should explain why their work has no societal impact or why the paper does not address societal impact.
   \item Examples of negative societal impacts include potential malicious or unintended uses (e.g., disinformation, generating fake profiles, surveillance), fairness considerations (e.g., deployment of technologies that could make decisions that unfairly impact specific groups), privacy considerations, and security considerations.
   \item The conference expects that many papers will be foundational research and not tied to particular applications, let alone deployments. However, if there is a direct path to any negative applications, the authors should point it out. For example, it is legitimate to point out that an improvement in the quality of generative models could be used to generate deepfakes for disinformation. On the other hand, it is not needed to point out that a generic algorithm for optimizing neural networks could enable people to train models that generate Deepfakes faster.
   \item The authors should consider possible harms that could arise when the technology is being used as intended and functioning correctly, harms that could arise when the technology is being used as intended but gives incorrect results, and harms following from (intentional or unintentional) misuse of the technology.
   \item If there are negative societal impacts, the authors could also discuss possible mitigation strategies (e.g., gated release of models, providing defenses in addition to attacks, mechanisms for monitoring misuse, mechanisms to monitor how a system learns from feedback over time, improving the efficiency and accessibility of ML).
    \end{itemize}
    
\item {\bf Safeguards}
    \item[] Question: Does the paper describe safeguards that have been put in place for responsible release of data or models that have a high risk for misuse (e.g., pretrained language models, image generators, or scraped datasets)?
    \item[] Answer: \answerNA{}
    \item[] Justification: We don't release new data/models.
    \item[] Guidelines:
    \begin{itemize}
   \item The answer NA means that the paper poses no such risks.
   \item Released models that have a high risk for misuse or dual-use should be released with necessary safeguards to allow for controlled use of the model, for example by requiring that users adhere to usage guidelines or restrictions to access the model or implementing safety filters.
   \item Datasets that have been scraped from the Internet could pose safety risks. The authors should describe how they avoided releasing unsafe images.
   \item We recognize that providing effective safeguards is challenging, and many papers do not require this, but we encourage authors to take this into account and make a best faith effort.
    \end{itemize}

\item {\bf Licenses for existing assets}
    \item[] Question: Are the creators or original owners of assets (e.g., code, data, models), used in the paper, properly credited and are the license and terms of use explicitly mentioned and properly respected?
    \item[] Answer:  \answerYes{}
    \item[] Justification: We properly cite the data/models used in our paper.
    \item[] Guidelines:
    \begin{itemize}
   \item The answer NA means that the paper does not use existing assets.
   \item The authors should cite the original paper that produced the code package or dataset.
   \item The authors should state which version of the asset is used and, if possible, include a URL.
   \item The name of the license (e.g., CC-BY 4.0) should be included for each asset.
   \item For scraped data from a particular source (e.g., website), the copyright and terms of service of that source should be provided.
   \item If assets are released, the license, copyright information, and terms of use in the package should be provided. For popular datasets, \url{paperswithcode.com/datasets} has curated licenses for some datasets. Their licensing guide can help determine the license of a dataset.
   \item For existing datasets that are re-packaged, both the original license and the license of the derived asset (if it has changed) should be provided.
   \item If this information is not available online, the authors are encouraged to reach out to the asset's creators.
    \end{itemize}

\item {\bf New assets}
    \item[] Question: Are new assets introduced in the paper well documented and is the documentation provided alongside the assets?
    \item[] Answer: \answerNA{}
    \item[] Justification:  No new asset is introduced in the paper.
    \item[] Guidelines:
    \begin{itemize}
   \item The answer NA means that the paper does not release new assets.
   \item Researchers should communicate the details of the dataset/code/model as part of their submissions via structured templates. This includes details about training, license, limitations, etc.
   \item The paper should discuss whether and how consent was obtained from people whose asset is used.
   \item At submission time, remember to anonymize your assets (if applicable). You can either create an anonymized URL or include an anonymized zip file.
    \end{itemize}

\item {\bf Crowdsourcing and research with human subjects}
    \item[] Question: For crowdsourcing experiments and research with human subjects, does the paper include the full text of instructions given to participants and screenshots, if applicable, as well as details about compensation (if any)?
    \item[] Answer: \answerNA{}
    \item[] Justification: The paper does not involve crowdsourcing nor research with human subjects.
    \item[] Guidelines:
    \begin{itemize}
   \item The answer NA means that the paper does not involve crowdsourcing nor research with human subjects.
   \item Including this information in the supplemental material is fine, but if the main contribution of the paper involves human subjects, then as much detail as possible should be included in the main paper.
   \item According to the NeurIPS Code of Ethics, workers involved in data collection, curation, or other labor should be paid at least the minimum wage in the country of the data collector.
    \end{itemize}

\item {\bf Institutional review board (IRB) approvals or equivalent for research with human subjects}
    \item[] Question: Does the paper describe potential risks incurred by study participants, whether such risks were disclosed to the subjects, and whether Institutional Review Board (IRB) approvals (or an equivalent approval/review based on the requirements of your country or institution) were obtained?
    \item[] Answer: \answerNA{}
    \item[] Justification: The paper does not involve crowdsourcing nor research with human subjects.
    \item[] Guidelines:
    \begin{itemize}
   \item The answer NA means that the paper does not involve crowdsourcing nor research with human subjects.
   \item Depending on the country in which research is conducted, IRB approval (or equivalent) may be required for any human subjects research. If you obtained IRB approval, you should clearly state this in the paper.
   \item We recognize that the procedures for this may vary significantly between institutions and locations, and we expect authors to adhere to the NeurIPS Code of Ethics and the guidelines for their institution.
   \item For initial submissions, do not include any information that would break anonymity (if applicable), such as the institution conducting the review.
    \end{itemize}

\item {\bf Declaration of LLM usage}
    \item[] Question: Does the paper describe the usage of LLMs if it is an important, original, or non-standard component of the core methods in this research? Note that if the LLM is used only for writing, editing, or formatting purposes and does not impact the core methodology, scientific rigorousness, or originality of the research, declaration is not required.
    \item[] Answer: \answerNA{}
    \item[] Justification: The core method development in this research does not involve LLMs as any important, original, or non-standard components.
    \item[] Guidelines:
    \begin{itemize}
   \item The answer NA means that the core method development in this research does not involve LLMs as any important, original, or non-standard components.
   \item Please refer to our LLM policy (\url{https://neurips.cc/Conferences/2025/LLM}) for what should or should not be described.
    \end{itemize}

\end{enumerate}

\end{document}